\def\maxwidth{ %
  \ifdim\Gin@nat@width>\linewidth
    \linewidth
  \else
    \Gin@nat@width
  \fi
}
\definecolor{fgcolor}{rgb}{0.345, 0.345, 0.345}
\definecolor{shadecolor}{rgb}{.97, .97, .97}
\definecolor{messagecolor}{rgb}{0, 0, 0}
\definecolor{warningcolor}{rgb}{1, 0, 1}
\definecolor{errorcolor}{rgb}{1, 0, 0}
\newenvironment{knitrout}{}{} 
\newcommand\msTitle{Panel data analysis via mechanistic models}
\newcommand\PanelPOMP{PanelPOMP}
\newcommand\SuppSecLikAvg{S2\xspace}
\newcommand\SuppTabAlgPars{S-1\xspace}
\newcommand\Np{J}   
\newcommand\nrep{r} 
\newcommand\Nrep{R} 
\newcommand\unit{u} 
\newcommand\Unit{U} 
\newcommand\Nmif{M} 
\newcommand\nmif{m} 
\newcommand\lambdaMCAP{\lambda_{\mathrm{loess}}}
\newcommand\prob{\mathbb{P}}
\newcommand\E{\mathbb{E}}
\newcommand\var{\mathrm{Var}}
\newcommand\real{\mathbb{R}}
\newcommand\normal{\mathrm{Normal}}
\newcommand\lik{\ell}
\newcommand\Xspace{{\mathbb X}}
\newcommand\Yspace{{\mathbb Y}}
\newcommand\Thetaspace{\mathbb{\Theta}}
\newcommand\dimSpecific{{\mathrm{dim}(\Psi)}}
\newcommand\dimShared{{\mathrm{dim}(\Phi)}}
\newcommand\Thetadim{{{\mathrm{dim}}(\Thetaspace)}}
\newcommand\contactsRate{\Lambda}
\newcommand\given{{\, | \,}}
\newcommand\giventh{{\,;\,}}
\newcommand\mycolon{{\hspace{0.6mm}:\hspace{0.6mm}}}
\newcommand\listA{A}
\newcommand\listR{R}
\newcommand\Cii{C_1}
\newcommand\Ciii{C_2}
\newcommand\Civ{C_3}
\newcommand\Cv{C_4}
\newcommand\mclik[2]{L_{#1}^{(#2)}}
\newcommand\mcUnit{\bar L}
\newcommand\likA{\widetilde L}
\newcommand\likB{\widehat L}
\newcommand\revision[1]{#1\xspace}
\newtheorem{theorem}{Theorem}
\newcommand{\conditionList}{ 
\setlength{\topsep}{1mm} \setlength{\itemsep}{0.5mm} \setlength{\parsep}{0cm}}
\newcommand\bigO{\mathcal{O}}
\newcommand\negBelowProfile{\vspace{-8mm}}
\newcommand{\blind}{1}
\begin{document}
\def\spacingset#1{\renewcommand{\baselinestretch} {#1}\small\normalsize} 
\spacingset{1}
\if1\blind
{
\title{\bf \msTitle}
\author{Carles Bret\'o and Edward L. Ionides \\
  Department of Statistics, University of Michigan\\
  and \\
  Aaron A. King\thanks{This research was supported by National Science Foundation grant DMS-1308919 and National Institutes of Health grants U54-GM111274, U01-GM110712 and R01-AI101155.
}\hspace{.2cm}\\
  Department of Ecology and Evolutionary Biology, University of Michigan}
\maketitle
} \fi
\if0\blind
{
  \bigskip
  \bigskip
  \bigskip
  \begin{center}
    {\LARGE\bf \msTitle}
\end{center}
  \medskip
} \fi
\bigskip
\begin{abstract}
\noindent 
Panel data, also known as longitudinal data, consist of a collection of time series. Each time series, which could itself be multivariate, comprises a sequence of measurements taken on a distinct unit. Mechanistic modeling involves writing down scientifically motivated equations describing the collection of dynamic systems giving rise to the observations on each unit. A defining characteristic of panel systems is that the dynamic interaction between units should be negligible. Panel models therefore consist of a collection of independent stochastic processes, generally linked through shared parameters while also having unit-specific parameters. To give the scientist flexibility in model specification, we are motivated to develop a framework for inference on panel data permitting the consideration of arbitrary nonlinear, partially observed panel models. We build on iterated filtering techniques that provide likelihood-based inference on nonlinear partially observed Markov process models for time series data. Our methodology depends on the latent Markov process only through simulation; this plug-and-play property ensures applicability to a large class of models. We demonstrate our methodology on a toy example and two epidemiological case studies.
\newcommand\abstractComplexity{issues arising due to the combination of model complexity and dataset size.}%
We address inferential and computational \revision{\abstractComplexity}
\end{abstract}
\noindent%
{\it Keywords:} longitudinal data; particle filter; sequential Monte Carlo; likelihood; nonlinear dynamics.
\vfill
\newpage

\spacingset{1.7} 




\section{Introduction} \label{sec:intro}
Analyzing time series data on a collection of related units provides opportunities to study aspects of dynamic systems---their replicability, or dependence on properties of the units---that cannot be revealed from measurements on a single unit.
The units might be individual humans or animals, in an observational or experimental study. 
The units might also be spatial locations, giving a panel representation of spatiotemporal data.
As a consequence of advances in data collection, scientists investigating dynamic systems have growing capabilities to obtain measurements of increasing length on increasingly many units.
Statistical investigation of such data, known as panel data analysis, is therefore playing a growing role in the scientific process.

Mechanistic modeling of a dynamic system involves writing down equations describing the evolution of the system through time.
Time series analysis using mechanistic models involves determining whether the model provides an adequate description of the system, and if so, identifying plausible values for unknown parameters \citep{breto09}.
Stochasticity, nonlinearity and noisy incomplete observations are characteristic features of many systems in the biological and social sciences \citep{bjornstad01,dobson14}.
Monte Carlo inference approaches have been developed that are effective for general classes of models with these properties. 
Such methods include iterated filtering \citep{ionides06,ionides15}, particle Markov chain Monte Carlo \citep{andrieu10} and synthetic likelihood \citep{wood10}.
All these inference algorithms obtain their general applicability by enjoying the plug-and-play property, that is, they interface with the dynamic model only through simulation \citep{breto09,he10}.
However, these methodologies do not address the particular structure of panel models and the high-dimensional nature of panel data. 
Therefore, new methodology is required to analyze panel data when there is a need to consider models outside the linear, Gaussian paradigm.
We proceed by building on the iterated filtering approach of \citet{ionides15}, deriving a panel iterated filtering likelihood maximization algorithm.
The panel iterated filtering algorithm, an associated convergence theorem, and a software implementation equipped with an appropriate domain-specific modeling language, all extend the existing theory and practice of iterated filtering.

Across the broad applications of nonlinear partially observed stochastic dynamic models for time series analysis \citep{douc14} one can anticipate many situations where multiple time series are available and give rise to the structure of panel data.
In particular, panel data on dynamic systems arises in pharmacokinetics \citep{donnet13}, molecular biology \citep{chen16}, infectious disease transmission \citep{cauchemez04,yang10,yang12}, and microeconomics \citep{heiss08,bartolucci12,mesters14}.
Our methodology differs from those employed by these authors in that it provides plug-and-play likelihood-based inference applicable to general nonlinear, non-Gaussian models.
This scope of applicability also sets our goals apart from the extensive panel methodology literature building on a linear regression framework \citep[e.g.,][]{hsiao14}.

In Section~\ref{sec:meth} we present a basic panel iterated filtering algorithm, and in Section~\ref{sec:pif} we prove its convergence under appropriate regularity conditions. 
An issue arising for large panel datasets is scalability of statistical methodology, and we develop three techniques to address this issue in Sections~\ref{sec:MCAP},~\ref{sec:marginal} and~\ref{sec:replication}.
These scaling techniques are illustrated on a toy example in Section~\ref{sec:gompertz}.
Two scientifically motivated examples follow: modeling the transmission of polio in Section~\ref{sec:polio}, and dynamic variation in human sexual contact rates in Section~\ref{sec:hiv}.
In Section~\ref{sec:dis}, we conclude by discussing relationships to other approaches and indicating some extensions.

\section{Inference methodology: panel iterated filtering (PIF)} \label{sec:meth}

Units of a panel are labeled $\{1,2,\dots,\Unit\}$, which we write as $1\mycolon\Unit$.
The $N_\unit$ measurements collected on unit $\unit$ are written as $y^*_{\unit,1:N_{\unit}} = \{y^*_{\unit,1},\dots,y^*_{\unit,N_\unit}\}$ where $y^*_{\unit,n}$ is collected at time $t_{\unit,n}$ with $t_{\unit,1}<t_{\unit,2}<\dots<t_{\unit,N_\unit}$.
These data are considered fixed and modeled as a realization of an observable stochastic process $Y_{\unit,1:N_\unit}$.
This observable process is constructed to be dependent on a latent Markov process $\{X_\unit(t),t_{\unit,0}\le t\le t_{\unit,N_\unit}\}$, for some $t_{\unit,0}\le t_{\unit,1}$.
Further requiring that $\{X_\unit(t)\}$ and $\{Y_{\unit,i},i\neq n\}$ are independent of $Y_{\unit,n}$ given $X_\unit(t_{\unit,n})$, for each $n\in 1\mycolon N_{\unit}$, completes the structure required for a partially observed Markov process (POMP) model for unit~$\unit$ \revision{\citep{ionides11,king16}}.
If all units are modeled as independent, the model is called a {\PanelPOMP}.
Although we can treat time as either continuous or discrete, our attention will focus on the latent process at the observation times, so we write $X_{\unit,n}=X_\unit(t_{\unit,n})$.
We suppose that $X_{\unit,n}$ and $Y_{\unit,n}$ take values in arbitrary spaces $\Xspace_{\unit}$ and $\Yspace_{\unit}$ respectively.
We suppose that the joint density of $X_{\unit,0:N_\unit}$ and $Y_{\unit,1:N_\unit}$ exists, with respect to some suitable measure, and is written as $f_{X_{\unit,0:N_\unit}Y_{\unit,1:N_\unit}}(x_{\unit,0:N_\unit},y_{\unit,1:N_\unit}\giventh\theta)$,
with dependence on an unknown parameter $\theta\in\Thetaspace\subset\real^\Thetadim$.
Each component of the vector $\theta$ may affect one, several or all units.
This framework encompasses fixed effects (discussed in Section~\ref{sec:marginal}) and random effects (discussed in Section~\ref{sec:dis}).
The transition density $f_{X_{\unit,n}|X_{\unit,n-1}}(x_{\unit,n}\given x_{\unit,n-1}\giventh\theta)$ and measurement density $f_{Y_{\unit,n}|X_{\unit,n}}(y_{\unit,n}\given x_{\unit,n}\giventh\theta)$ are permitted to depend arbitrarily on $u$ and $n$, allowing non-stationary models and the inclusion of covariate time series (illustrated in Section~\ref{sec:polio}).
The framework also includes continuous-time dynamic models (illustrated in Section~\ref{sec:hiv}) and discrete-time dynamic models (illustrated in Sections~\ref{sec:gompertz} and~\ref{sec:polio}), for which $X_{\unit,0:N_\unit}$ is specified directly without ever defining $\{X_\unit(t),t_{\unit,0}\le t\le t_{\unit,N_\unit}\}$.

The marginal density of $Y_{\unit,1:N_\unit}$ at $y_{\unit,1:N_\unit}$ is $f_{Y_{\unit,1:N_\unit}}(y_{\unit,1:N_\unit}\giventh\theta)$ and the likelihood function for unit $\unit$ is
$\lik_{\unit}(\theta) = f_{Y_{\unit,1:N_\unit}}(y^*_{\unit,1:N_\unit}\giventh\theta)$.
The likelihood for the entire panel is 
$\lik(\theta) = \prod_{\unit=1}^{\Unit} \lik_{\unit}(\theta)$,
and any solution $\hat\theta=\arg\max\lik(\theta)$ is a maximum likelihood estimate (MLE).

\newpage

\spacingset{1.4}

\newcommand\mystretch{\rule[-2mm]{0mm}{5mm} }
\newcommand\asp{\hspace{4mm}}

\noindent\begin{tabular}{l}
\hline
{\bf Algorithm~{PIF}. Panel iterated filtering}\rule[-1.5mm]{0mm}{6mm}\\
\hline
{\bf input:}\rule[-1.5mm]{0mm}{6mm} \\
Simulator of initial density, $f_{X_{\unit,0}}(x_{\unit,0}\giventh \theta)$ for $\unit$ in $1{\mycolon}\Unit$ \\
Simulator of transition density, $f_{X_{\unit,n}|X_{\unit,n-1}}(x_{\unit,n}\given x_{\unit,n-1}\giventh\theta)$ for $\unit$ in $1{\mycolon}\Unit$, $n$ in $1{\mycolon}N_\unit$ \\
Evaluator of measurement density, $f_{Y_{\unit,n}|X_{\unit,n}}(y_{\unit,n}\given x_{\unit,n}\giventh\theta)$ for $\unit$ in $1{\mycolon}\Unit$, $n$ in $1{\mycolon}N_\unit$ \\
Data, $y_{\unit,n}^*$ for $\unit$ in $1{\mycolon}\Unit$ and $n$ in $1\mycolon N_\unit$ \\
Number of iterations, $\Nmif$ \\
Number of particles, $J$ \\
Starting parameter swarm, $\Theta^0_{j}$ for $j$ in $1\mycolon J$\\
Simulator of perturbation density, $h_{\unit,n}(\theta\given\varphi\giventh\sigma)$ for $\unit$ in $1{\mycolon}\Unit$, $n$ in $0{\mycolon}N_\unit$\\
Perturbation sequence, $\sigma_{m}$ for $\nmif$ in $1\mycolon M$ \\
{\bf output:}\rule[-1.5mm]{0mm}{6mm} \\
Final parameter swarm, $\Theta^M_j$ for $j$ in $1\mycolon J$\\
\hline
For $\nmif$ in $1\mycolon M$\rule[0mm]{0mm}{5mm}\\
\asp Set $\Theta^m_{0,j}=\Theta^{m-1}_j$ for  $j$ in $1\mycolon J$\mystretch\\
\asp For $\unit$ in $1\mycolon \Unit$\\
\asp \asp Set $\Theta^{F,m}_{\unit,0,j}\sim h_{\unit,0}\big(\theta\given\Theta^{m}_{\unit-1,j}\giventh\sigma_{m}\big)$ for  $j$ in $1\mycolon J$\mystretch\\
\asp \asp Set $X_{\unit,0,j}^{F,m}\sim f_{X_{\unit,0}}(x_{\unit,0} \giventh \Theta^{F,m}_{\unit,0,j})$ for $j$ in $1\mycolon J$\mystretch\\
\asp \asp For $n$ in $1\mycolon N_{\unit}$\\
\asp \asp \asp $\Theta^{P,m}_{\unit,n,j}\sim h_{\unit,n}(\theta\given\Theta^{F,m}_{\unit,n-1,j},\sigma_{m})$ for $j$ in $1\mycolon J$\mystretch\\
\asp \asp \asp $X_{\unit,n,j}^{P,m}\sim f_{X_{\unit,n}|X_{\unit,n-1}}(x_{\unit,n} \given X^{F,m}_{\unit,n-1,j}\giventh\Theta^{P,m}_{\unit,n,j})$ for $j$ in $1\mycolon J$  \mystretch\\
\asp \asp \asp $w_{\unit,n,j}^m = f_{Y_{\unit,n}|X_{\unit,n}}(y^*_{\unit,n}\given X_{\unit,n,j}^{P,m} \giventh \Theta^{P,m}_{\unit,n,j})$ for $j$ in $1\mycolon J$  \mystretch\\
\asp \asp \asp Draw $k_{1:J}$ with $\prob(k_j=i)=  w_{\unit,n,i}^m\Big/\sum_{v=1}^J w_{\unit,n,v}^m$ for $i,j$ in $1\mycolon J$\\
\asp \asp \asp $\Theta^{F,m}_{\unit,n,j}=\Theta^{P,m}_{\unit,n,k_j}$ and $X^{F,m}_{\unit,n,j}=X^{P,m}_{\unit,n,k_j}$ for $j$ in $1\mycolon J$   \mystretch\\
\asp \asp End For \\ 
\asp \asp Set $\Theta^{m}_{\unit,j}=\Theta^{F,m}_{\unit,N_{\unit},j}$ for $j$ in $1\mycolon J$\\
\asp End For \\ 
\asp Set $\Theta^{m}_j=\Theta^m_{\Unit,j}$ for $j$ in $1\mycolon J$\\
End For\\ 
\hline
\end{tabular}

\clearpage
\spacingset{1.7}

The PIF algorithm, represented by the pseudocode above, is an adaptation of the IF2 algorithm \citep{ionides15} to {\PanelPOMP} models.
\newcommand\describePIF{Like previous iterated filtering algorithms \citep{ionides06,ionides15}, PIF explores the space of unknown parameters by stochastically perturbing them and applying sequential Monte Carlo to filter the data seeking for parameter values that are concordant with the data.
Perturbations are successively diminished over repeated filtering iterations, leading to convergence to a maximum likelihood estimate.
This general approach of iterated filtering needs some adaptation in order for it to be useful for PanelPOMP models.}%
\revision{\describePIF}

\newcommand\OJMNU{The number of computations required for PIF has order $\bigO(JMN{\Unit})$, where $N$ is the mean of $\{N_1,\dots,N_\Unit\}$ and $J$ and $M$ are the number of particles and iterations, defined in the pseudocode.}%
\revision{\OJMNU}
The pseudocode specifies unique labels for each quantity constructed to clarify the logical structure of the algorithm, and a literal implementation of this pseudocode therefore requires storing $\bigO(JMN{\Unit})$ particles.
Each particle contains a perturbed parameter vector and so has size $\bigO(\Unit)$ if $\Thetadim$ is $\bigO(\Unit)$, leading to a total storage requirement of $\bigO(JMN{\Unit}^2)$.
However, we only need to store the value of the latent process particles, $X_{\unit,n,1:J}^{P,m}$ and $X_{\unit,n,1:J}^{F,m}$, and perturbed parameter particles, $\Theta_{\unit,n,1:J}^{P,m}$ and $\Theta_{\unit,n,1:J}^{F,m}$, for the current unit, time point and PIF iteration.
Taking advantage of this memory over-writing opportunity leads to a storage requirement that is $\bigO(J{\Unit})$. 

The theoretical justification of PIF is based on the observation that a {\PanelPOMP} model can be represented as a time-inhomogeneous POMP model.
Algorithms for {\PanelPOMP}s, and their theoretical support, can therefore be derived from previous approaches for POMPs.
Here, we use a representation concatenating the time series for each unit, corresponding to a latent POMP process
\begin{equation}\label{eq:concatenation}
X(t) = X_\unit \Big( t_{\unit,0} + \left(t-T^{\,\mathrm{cum}}_{\unit-1}\right) \Big) \mbox{ for }
T^{\,\mathrm{cum}}_{\unit-1} \le t \le T^{\,\mathrm{cum}}_{\unit}-1,
\end{equation}
where $T^{\,\mathrm{cum}}_{\unit}$ is a cumulative latent POMP process time for all panel units up to unit $\unit$, given by
\begin{equation}\label{eq:Tcum}
T^{\,\mathrm{cum}}_{\unit}=\unit+\sum_{k=1}^{\unit} \big(t_{k,N_k}-t_{k,0}\big)
\end{equation}
and $T^{\,\mathrm{cum}}_0=0$.
We leave $X(t)$ undefined for $T^{\,\mathrm{cum}}_{\unit}-1 < t < T^{\,\mathrm{cum}}_{\unit+1}$ to provide a formal separation  between the latent processes for each unit. 
In \eqref{eq:Tcum} we have set the value of this time separation to one, though any positive number would suffice and the exact value is irrelevant on the discrete timescale consisting of the sequence of observation times.
\newcommand\describePIFconcatenation{In the language of data manipulation, our representation converts wide panel data into a tall format \citep{wickham14}.
As we show subsequently, a POMP representation using a tall format preserves the theoretical justification for iterated filtering, while also taking advantage of favorable sequential Monte Carlo (SMC) stability properties for long time series.
Conversely, a wide format POMP representation risks encountering the curse of dimensionality for SMC \citep{bengtsson08}.}%
\revision{\describePIFconcatenation}

\newcommand\differenceWithPrevious{%
This POMP representation of a PanelPOMP model is one of three noted by \citet{romero-severson15} and discussed further in the supplement (Section~S1).
\citet{romero-severson15} used a different algorithm---their approach was convenient to code and sufficient for their example but its computational feasibility quickly breaks down as the length of each panel time series increases so it is infeasible in situations such as our polio example in Section~6.}%
\revision{\differenceWithPrevious}

\section{Convergence of PIF} \label{sec:pif}

PIF investigates the parameter space using a particle swarm $\Theta^m_{1:J} = \{\Theta^m_j, \mbox{$j \in 1{\mycolon}J$}\}$.
With a sufficiently large number $J$ of particles, each iteration $\nmif$ of PIF approximates a Bayes map that selects a particle $j$ with probability proportional to the value of the likelihood function at $\Theta^m_j$.
Heuristically, repeated application of the Bayes map favors particles with high likelihood and should lead to convergence of the particle swarm to a neighborhood of the MLE.
We state such a convergence theorem, followed by the technical assumptions we use to prove it.

Our theorem combines Theorems~1 and~2 of \citet{ionides15} in the context of the POMP representation of a {\PanelPOMP} model in \eqref{eq:concatenation}.
\newcommand\mathInnovation{Their Theorem~1 proved the existence of a limit distribution for an iterated perturbed Bayes map by taking advantage of its linearity under the Hilbert projective metric. In addition, they showed that sequential Monte Carlo can provide a uniform approximation of this limit distribution. Their Theorem~2 bounded excursion probabilities under this iterated perturbed Bayes map to derive sufficient conditions for the limit distribution to concentrate around the MLE.}%
\revision{\mathInnovation}
Here, we combine these two theorems into a simpler statement.

\begin{theorem}\label{thm:pif}
Let $\Theta^M_{1:J}$ be the output of PIF, with fixed perturbations $\sigma_m=\delta$.
Suppose regularity conditions \listA\ref{A1}--\listA\ref{A6} below.
For all $\epsilon>0$, there exists $\delta$, $M_0$ and $C$ such that, for all $M\ge M_0$ and all $j \in 1{\mycolon}J$,
\begin{equation}
\prob\Big[ \big |\Theta^M_j-\hat\theta\big| >\epsilon \Big] < \epsilon + \frac{C}{\sqrt{J}}.
\end{equation}
\end{theorem}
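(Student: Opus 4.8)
The plan is to exploit the concatenation representation \eqref{eq:concatenation}, which recasts the {\PanelPOMP} as a single time-inhomogeneous POMP whose observation sequence is the units' series laid end to end. First I would verify that running PIF on the panel is algorithmically identical to running the IF2 algorithm of \citet{ionides15} on this concatenated POMP: the perturbation, prediction, weighting and resampling steps in the pseudocode map step-for-step onto the IF2 operations applied along the concatenated timescale $t$, with the inter-unit separations in \eqref{eq:Tcum} acting as data-free transitions that let the latent process restart for each unit while keeping the concatenated state Markovian. Since the panel likelihood $\lik(\theta)=\prod_{\unit=1}^{\Unit}\lik_{\unit}(\theta)$ equals the likelihood of the concatenated POMP, the MLE $\hat\theta$ is the same object in both formulations. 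This reduction lets me invoke Theorems~1 and~2 of \citet{ionides15} directly, provided the regularity conditions \listA1--\listA6 for the panel imply the hypotheses those theorems require of the concatenated model.

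The core estimate then comes from a two-part decomposition. Writing the law of a single output particle as the composition of $M$ idealized (infinite-particle) perturbed Bayes maps together with the sequential Monte Carlo error incurred by using only $J$ particles, and noting that by exchangeability the law of $\Theta^M_j$ does not depend on $j$, it suffices to bound the excursion probability of this law. Theorem~1 of \citet{ionides15} supplies two ingredients: the iterated perturbed Bayes map is a contraction in the Hilbert projective metric, so its iterates converge geometrically to a unique limit distribution $\mu^\delta$; and SMC approximates each iterate uniformly with error of order $J^{-1/2}$. Theorem~2 controls the excursions of $\mu^\delta$, showing that as the fixed perturbation $\delta$ shrinks the limit distribution concentrates on $\hat\theta$. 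Concretely, I would (i) use Theorem~2 to pick $\delta$ so small that $\mu^\delta$ places mass less than $\epsilon/2$ outside the $\epsilon$-ball about $\hat\theta$; (ii) use the geometric convergence from Theorem~1 to pick $M_0$ so that for every $M\ge M_0$ the $M$-th idealized iterate is within $\epsilon/2$ of $\mu^\delta$ on this event; and (iii) absorb the remaining Monte Carlo discrepancy into the term $C/\sqrt{J}$. Combining by the triangle inequality yields $\prob\big[|\Theta^M_j-\hat\theta|>\epsilon\big]<\epsilon+C/\sqrt{J}$.

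I expect the main obstacle to be the verification in the first step that \listA1--\listA6 deliver the conditions required by \citet{ionides15} \emph{uniformly} along the concatenated, time-inhomogeneous model. The Hilbert-metric contraction underlying Theorem~1 rests on a mixing-type bound on the one-step kernels and on two-sided bounds on the measurement densities; because concatenation stitches together heterogeneous units and inserts data-free gaps, I must ensure the contraction coefficient and the approximation constant $C$ remain bounded uniformly over units, time points and iterations, rather than degrading as the panel grows or as $M\to\infty$. A secondary point to confirm is that the excursion bound of Theorem~2 survives the time-inhomogeneity, so that $\mu^\delta$ still concentrates at the panel MLE. Once these uniformities are in place, the remaining arguments reduce to the bookkeeping of the decomposition above.
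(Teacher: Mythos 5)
Your proposal follows the same route as the paper: the proof consists precisely of observing that PIF is IF2 applied to the concatenated POMP representation in \eqref{eq:concatenation}, matching conditions \listA\ref{A1}--\listA\ref{A6} to conditions B1--B6 of \citet{ionides15}, and invoking their Theorems~1 and~2. Your additional sketch of the Hilbert-metric contraction, SMC error, and excursion bounds merely unpacks the content of those cited theorems, so the argument is essentially identical to the paper's.
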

To discuss the regularity conditions, we need to set up some more notation.
We write $Y=(Y_{1,1:N_1},\dots,Y_{\Unit,1:N_\Unit})$ and consequently we write $y^*$ for a vector of the entire panel dataset.
The likelihood function is $\lik(\theta) = f_{Y}(y^*\giventh\theta)$ and we suppose the following regularity condition:
\newcounter{A}
\newcounter{savedA}
\begin{list}{(\listA\arabic{A})}{\usecounter{A}\conditionList}
\item\label{A1} There is a unique MLE, and $\lik(\theta)$ is continuous in a neighborhood of this MLE.
\setcounter{savedA}{\theA}
\end{list}
To allow us to talk about parameter perturbations, we define a perturbed parameter space,
$$\breve\Thetaspace= \Thetaspace^{N_1+1} \times \Thetaspace^{N_2+1} \times\dots\times \Thetaspace^{N_\Unit+1},$$
for which we write $\breve\theta\in\breve\Thetaspace$ as
\begin{equation}\label{breveTheta}
\breve\theta=\big(\theta_{1,0},\theta_{1,1},\dots,\theta_{1,N_1},\theta_{2,0},\dots,\theta_{2,N_2},\dots,\theta_{\Unit,N_\Unit}\big).
\end{equation}
For compatibility with the POMP representation of a PanelPOMP in \eqref{eq:concatenation}, perturbed parameters for each time point and each unit are concatenated in \eqref{breveTheta} with $\theta_{\unit,n}$ being a perturbed parameter for the $n$th observation on unit $\unit$.
On the perturbed parameter space $\breve\Thetaspace$, the extended likelihood function is defined as
\begin{eqnarray}
\breve\lik(\breve\theta)
&=& \nonumber
\prod_{\unit=1}^{N_\Unit} \int\dots\int dx_{\unit,0} \dots dx_{\unit,N_\Unit} \bigg\{
f_{X_{\unit,0}}(x_{\unit,0}\giventh\breve\theta_{\unit,0})
\\
&&\hspace{2cm}\prod_{n=1}^{N_\unit}
f_{X_{\unit,n}|X_{\unit,n-1}}(x_{\unit,n}\given x_{\unit,n-1}\giventh\breve\theta_{\unit,n})f_{Y_{\unit,n}|X_{\unit,n}}(y^*_{\unit,n}\given x_{\unit,n}\giventh\breve\theta_{\unit,n})
\bigg\}.
\end{eqnarray}
We suppose that the extended likelihood has a Lipschitz continuity property:
\begin{list}{(\listA\arabic{A})}{\usecounter{A}\conditionList}
\setcounter{A}{\thesavedA}
\item\label{A2}
Set $\breve N = \sum_{\unit=1}^\Unit (N_\unit+1)$, so that $\breve\Thetaspace=\Thetaspace^{\breve N}$.
Write $\breve\theta_n$ for the $n$th of the $\breve N$ terms in (\ref{breveTheta}), so that $\breve\theta=\breve\theta_{1:\breve N}$.
There is a $\Cii$ such that
\begin{equation}
\big|\breve\lik(\breve\theta)-\lik(\theta_{1,0})\big|
<
\Cii \sum_{n=2}^{\breve N} \big|\breve \theta_n -\breve \theta_{n-1}\big|.
\end{equation}
\setcounter{savedA}{\theA}
\end{list}
We also assume a uniformly positive measurement density:
\begin{list}{(\listA\arabic{A})}{\usecounter{A}\conditionList}
\setcounter{A}{\thesavedA}
\item\label{A3}
There are constants $\Ciii$ and $\Civ$ such that
$$0<\Ciii< f_{Y_{\unit,n}|X_{\unit,n}}(y_{\unit,n}^*\given x_{\unit,n}\giventh\theta)<\Civ<\infty,$$
for all $\unit\in 1{\mycolon}\Unit$, $n\in 1{\mycolon}N_\unit$, $x_{\unit,n}\in\Xspace_{\unit}$ and $\theta\in\Thetaspace$.
\setcounter{savedA}{\theA}
\end{list}
This condition will usually require $\Thetaspace$  and $\Xspace_\unit$ to be compact, for all $\unit\in 1\mycolon\Unit$.
Compactness of $\Thetaspace$ is satisfied if there is some limit to the scientifically plausible values of each parameter.
Compactness of $\Xspace_\unit$ may not be satisfied in practice, but much previous theory for SMC has used this strong requirement \citep[e.g.,][]{delmoral01,legland04}.
The remaining conditions concern the perturbation transition density, $h_{\unit,n}(\theta\given\varphi\giventh\sigma)$.
We suppose that $h_{\unit,n}(\theta\given\varphi\giventh\sigma)$ has bounded support on a normalized scale via the following condition:
\begin{list}{(\listA\arabic{A})}{\usecounter{A}\conditionList}
\setcounter{A}{\thesavedA}
\item\label{A4}
There is a $\Cv$ such that $h_{\unit,n}(\theta\given\varphi\giventh\sigma)=0$ when $|\theta-\varphi|<\Cv \sigma$ for all $\unit\in 1{\mycolon}\Unit$, $n\in 1{\mycolon}N_\unit$ and $\sigma$.
\setcounter{savedA}{\theA}
\end{list}
We also require some regularity of an appropriately rescaled limit of the Markov chain resulting from iterating the perturbation process.
Define $\{\breve\Theta_m, m\ge 0\}$ to be a Markov chain taking values in $\Thetaspace$ with $\breve\Theta_0$ drawn uniformly from the starting particles, $\Theta^0_{1:J}$, and transition density given by
\begin{eqnarray}
f_{\breve\Theta_m|\breve\Theta_{m-1}}(\theta_{\Unit,N_\Unit}\given\varphi\giventh\sigma)
&=& \nonumber
\int
\bigg\{
h_{1,0}(\theta_{1,0}\given\varphi\giventh\sigma)
\prod_{\unit=2}^{\Unit} h_{\unit,0}(\theta_{\unit,0}\given\theta_{\unit-1,N_{\unit-1}}\giventh\sigma)\, d\theta_{\unit-1,N_{\unit-1}}
\\
&&
\hspace{3cm}
\prod_{\unit=1}^{\Unit}\prod_{n=1}^{N_\unit} h_{\unit,n}(\theta_{\unit,n}\given\theta_{\unit,n-1}\giventh\sigma)\, d\theta_{\unit,n-1}
\bigg\}.
\end{eqnarray}
Thus, $\{\breve\Theta_m\}$ represents a random-walk-like process corresponding to combining the parameter perturbations of all units and all time points for one iteration of PIF.
Now, let $\{W_\sigma(t),t\ge 0\}$ be a right-continuous, piecewise constant process taking values in $\Thetaspace$ defined at its points of discontinuity by
\begin{equation}
W_\sigma(k\sigma^2)= \breve\Theta_{k}.
\end{equation}
If $h_{\unit,n}(\theta\given\varphi\giventh\sigma)$ were a scale family of additive perturbations, then $\{\breve\Theta_m\}$ would be a random walk that scales to a Brownian diffusion.
When $\Thetaspace$ has a boundary, $\{\breve\Theta_m\}$ cannot be exactly a random walk, but similar diffusion limits can apply \citep{bossy04}.
We require that $h_{\unit,n}(\theta\given\varphi\giventh\sigma)$ is chosen to be sufficiently regular to have such a diffusion limit, via the following assumptions:
\begin{list}{(\listA\arabic{A})}{\usecounter{A}\conditionList}
\setcounter{A}{\thesavedA}
\item \label{A5} $\{W_\sigma(t),0\le t\le 1\}$ converges weakly as $\sigma\to 0$ to a diffusion $\{W(t),0\le t\le 1\}$, in the space of right-continuous functions with left limits equipped with the uniform convergence topology.
For any open set $A\subset \Thetaspace$ with positive Lebesgue measure and $\epsilon>0$, there is a $\delta(A,\epsilon)>0$ such that $\prob\big[W(t)\in A \mbox{ for all } \epsilon\le t\le 1\given W(0)\big]>\delta$.
\item\label{A6} For some $t_0(\sigma)$ and $\sigma_0>0$, $W_\sigma(t)$ has a positive density on $\Thetaspace$, uniformly over the distribution of $W(0)$ for all $t>t_0$ and $\sigma<\sigma_0$.
\setcounter{savedA}{\theA}
\end{list}
\begin{proof}[Proof of Theorem~\ref{thm:pif}]
PIF is exactly the IF2 algorithm of \citet{ionides15} applied to the POMP representation of a {\PanelPOMP} model in equation~\eqref{eq:concatenation}.
\listA\ref{A1} is condition B3 of \citet{ionides15} together with a simplifying additional assumption that a unique MLE exists.
\listA\ref{A2} is a re-writing of their B6.
\listA\ref{A3} and \listA\ref{A4} are essentially their B4 and B5.
\listA\ref{A5} and \listA\ref{A6} match their B1 and B2, respectively.
Thus, we have established the conditions used for Theorems~1 and~2 of \citet{ionides15} for this POMP representation of a {\PanelPOMP} model.
The statement of our Theorem~\ref{thm:pif} follows directly from these two previous results.
\end{proof}

The perturbation density $h_{\unit,n}(\theta\given\varphi\giventh\sigma)$ has usually been chosen to be Gaussian in implementations of the IF2 algorithm and its predecessor, the IF1 algorithm \citep{ionides06}.
The use of Gaussian perturbations requires the user to reparameterize boundaries in the parameter space.
\newcommand\transformations{This may involve taking a logarithmic transformation of positive parameters and a logistic transform of interval-valued parameters (for example, see Tables~S3 and~S4).}%
\newcommand\truncation{In order to satisfy assumption A4, Gaussian perturbations must be truncated. Since the Gaussian distribution has short tails, ignoring truncation is practically equivalent to truncation at a large multiple of the standard deviation.}%
\newcommand\stein{Unlike an alternative theoretical framework using Stein's lemma to approximate derivatives via perturbed parameters and SMC \citep{doucet15,dai16} our justification for PIF does not require the choice of Gaussian perturbations.}%
\revision{\transformations} \revision{\truncation} \revision{\stein}

\section{Scalable methodology for large panels} \label{sec:scale}

Theorem~\ref{thm:pif} provides an asymptotic Monte Carlo convergence guarantee for a dataset of fixed size as the Monte Carlo effort increases.
In practice, reaching this \revision{Monte Carlo}  asymptotic regime becomes increasingly difficult as the panel dataset grows, whether the number of units becomes large, or there are many observations per unit, or both.
In this section, we consider some techniques that become important when using PIF for big datasets.
We demonstrate the methodology on a toy model in Section~\ref{sec:gompertz}.
Subsequently, we demonstrate data analysis for mechanistic panel models using PIF via two examples, one investigating disease transmission of polio and another investigating dynamics of human sexual behavior.

\subsection{Monte Carlo adjusted profile (MCAP) confidence intervals}\label{sec:MCAP}

PIF provides a Monte Carlo approach to maximizing the likelihood function for a {\PanelPOMP} model.
It is based on an SMC algorithm that can also provide an unbiased estimate of the maximized likelihood.
Monte Carlo methods to evaluate and maximize the likelihood function provide a foundation for constructing confidence intervals via profile likelihood.
When computational resources are sufficient to make Monte Carlo error small, its role in statistical inference may be negligible.
With large datasets and complex models, we cannot ignore Monte Carlo error so instead we quantify it and draw statistical inferences that properly account for it.
We use the Monte Carlo adjusted profile (MCAP) methodology of \citet{ionides17} which fits a smooth curve through Monte Carlo evaluations of points on a profile log likelihood.
MCAP then obtains a confidence interval using a cutoff on this estimated profile that is enlarged to give proper coverage despite Monte Carlo uncertainty in its construction.
Monte Carlo variability in maximizing and evaluating the likelihood both lead to expected under-estimation of the maximized log likelihood.
Despite such bias the MCAP methodology remains valid as long as this \emph{likelihood shortfall} is slowly varying as a function of the profiled parameter
\citep{ionides17}.
Our toy example in Section~\ref{sec:gompertz} demonstrates this phenomenon.
\newcommand\MCAP{For our subsequent examples, we applied the Monte Carlo adjusted profile (MCAP) methodology described by \citet{ionides17} and demonstrated in several recent scientific analyses \citep{smith17,ranjeva17,ranjeva18,pons-salort18}.
We used the R implementation of MCAP from \citet{ionides17} with an algorithmic smoothing parameter $\lambdaMCAP=0.9$ determining the fraction of profile points used to construct the neighborhoods for locally weighted quadratic regression smoothing.}%
\revision{\MCAP}

\subsection{Unit-specific parameters and marginal maximization}\label{sec:marginal}

The parameter space for a {\PanelPOMP} model may be structured into unit-specific and shared parameters.
To do this, we introduce a decomposition $\Thetaspace=\Phi\times\Psi^\Unit$ and $\theta=(\phi,\psi_1,\psi_2,\dots,\psi_\Unit)$ with the joint distribution of $X_{\unit,0:N_\unit}$ and $Y_{\unit,1:N_\unit}$, for each unit $\unit$, depending only on the shared parameter $\phi\in\Phi\subset\real^\dimShared$ and the unit-specific parameter $\psi_u\in\Psi\subset\real^\dimSpecific$.
\newcommand\unitSpecific{The general PanelPOMP specification does not insist on the existence of unit-specific parameters.
Correspondingly, the PIF algorithm does not require this structure and it does not play a role in the general theory.
However, when it exists, we can use this additional structure to advantage within the general framework of PIF.
One consequence of the presence of unit-specific parameters arises in a natural structure for the PIF perturbation densities: when PIF is filtering through panel unit $u$, only unit-specific parameters corresponding to unit $u$ need to be perturbed.
Another consequence is the existence of a block structure to the parameter space that can be exploited, as follows.}%
\revision{\unitSpecific}

When $\Unit$ is large, $\Thetadim=\dimShared + \Unit\dimSpecific$ also becomes large. 
For a fixed value of $\phi$, the marginal likelihoods of $\psi_{1:\Unit}$ can be maximized separately, due to independence between units. 
Formally, application of iterated filtering to each of these marginal optimizations is a special case of the PIF algorithm with $\Unit=1$.
Thus, the convergence theory for PIF gives us freedom to alternate marginal optimization steps with joint optimization over $\Thetaspace$, following a block coordinate ascent approach.
In practice, we demonstrate a simple two-step algorithm which first attempts to optimize over $\Thetaspace$ and then refines the resulting estimates of the unit-specific parameters by marginal searches for each unit.
Figure~\ref{fig:gompertz} of Section~\ref{sec:gompertz} shows that this leads to considerable Monte Carlo variance reduction on an analytically tractable example.

\subsection{Using replications for likelihood maximization and evaluation}\label{sec:replication}

Monte Carlo replication, with differing random number generator seed values, is a basic tool for obtaining and assessing Monte Carlo approximations to a maximum likelihood estimate and its corresponding maximized likelihood.
Replication is trivially parallelizable, so provides a simple strategy to take advantage of large numbers of computer processors. 
Repeated searches, from wide-ranging starting values, provide a practical assessment of the success of global maximization.
When many Monte Carlo searches have found a comparable maximized likelihood, and no searches have surpassed it, we have some confidence that the likelihood surface has been adequately investigated.

PIF requires an additional calculation to evaluate the likelihood at the proposed maximum. 
The PIF algorithm produces an estimate of the likelihood for the perturbed model, and if the perturbations are small this may provide a useful approximation to the likelihood, however re-evaluation with the unperturbed model is appropriate for likelihood-based inference. 
Making $R$ replicated Monte Carlo evaluations of the likelihood gives rise to estimates $\{\lik^{(\nrep)}_{\unit}, \nrep \in 1{\mycolon}\Nrep, \unit\in 1{\mycolon}\Unit\}$ for each replication and unit. 
One possible way to combine these is an estimate $\tilde \lik = \frac{1}{\Nrep}\sum_{\nrep=1}^{\Nrep} \prod_{\unit=1}^{\Unit} \lik^{(\nrep)}_{\unit}$. 
When computed via SMC, this estimate is unbiased \citep[Theorem 7.4.2 on page 239 in][]{delmoral04}.
However, we use an alternative unbiased estimate, $\hat \lik = \prod_{u=1}^U\frac{1}{R}\sum_{r=1}^R  \lik^{(r)}_u$, which has lower variance (derived in Section~\SuppSecLikAvg).


\section{A toy example: the panel Gompertz model} \label{sec:gompertz}

We consider a {\PanelPOMP} constructed as a stochastic version of the discrete-time Gompertz model for biological population growth \citep{winsor32}. 
We suppose that the density, $X_{\unit,n+1}$, of a population $\unit$ at time $n+1$ depends on the density, $X_{\unit,n}$, at time $n$ according to
\begin{equation}
\label{eq:gompertz1}
X_{\unit,n+1}=\kappa_\unit^{1-e^{-r_\unit}}\,X_{\unit,n}^{e^{-r_\unit}}\,\varepsilon_{\unit,n}.
\end{equation}
In \eqref{eq:gompertz1}, $\kappa_\unit$ is the carrying capacity of population $\unit$, $r_\unit$ is a positive parameter, and $\{\varepsilon_{\unit,n},\unit\in 1\mycolon\Unit,n\in 1\mycolon N_\unit\}$ are independent and identically-distributed lognormal random variables with $\log\varepsilon_{\unit,n}\sim\normal(0,\sigma_{\mathrm{G},\unit}^2)$.
We model the population density to be observed with errors in measurement that are lognormally distributed:
\begin{equation}
\nonumber \log{Y_{\unit,n}}\;\sim\;\normal\left(\log{X_{\unit,n}},\tau_\unit^2\right).
\end{equation}
The Gompertz model is a convenient toy nonlinear non-Gaussian model since it has a logarithmic transformation to a linear Gaussian process and therefore the exact likelihood is computable by the Kalman filter \citep{king16}.
\newcommand\nonnegligibleMC{The simulation experiment is designed to have non-negligible Monte Carlo error in order to test the effectiveness of the combined PIF and MCAP algorithms in this situation.}%
\revision{\nonnegligibleMC}
As discussed in Section~\ref{sec:MCAP}, we expect Monte Carlo estimates of profile log likelihood functions to fall below the actual (usually unknown) value. 
This is in part because imperfect maximization can only reduce the maximized likelihood, and in part a consequence of Jensen's inequality applied to the likelihood evaluation: the unbiased SMC likelihood evaluation has a negative bias on estimation of the log likelihood.
However, this bias produces a vertical shift in the estimated profile that may (and, in this example, does) have negligible effect on the resulting confidence interval.



\begin{figure}
\begin{knitrout}
\definecolor{shadecolor}{rgb}{1, 1, 1}\color{fgcolor}

{\centering \includegraphics[width=3.5in]{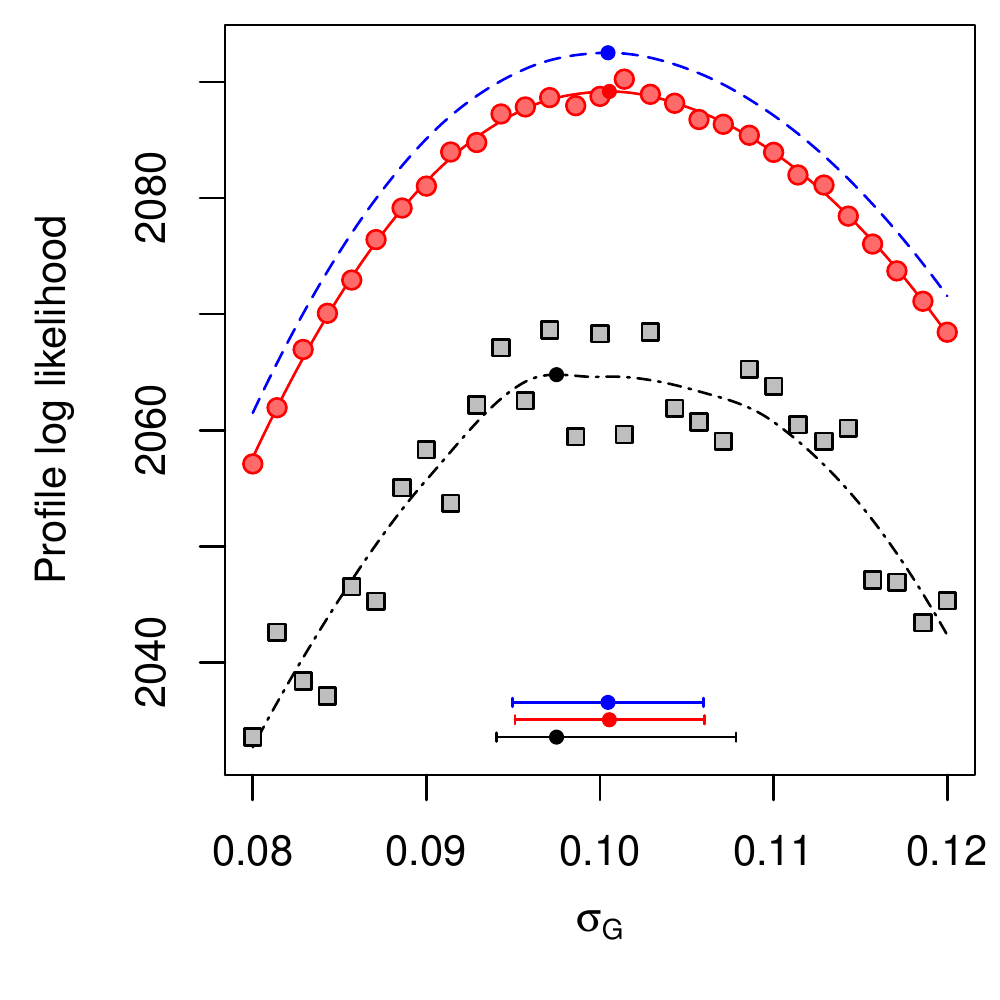} 

}

\end{knitrout}
\negBelowProfile
\caption{Profile log likelihood of $\sigma_{\mathrm{G}}$ for a panel of size $\Unit=50$ for the Gompertz model.
Blue line (dashes): exact profile.
Red points and line (circles): profile computed with PIF, including marginal maximization for unit-specific parameters.
Black points and line (squares): profile computed with PIF using only joint maximization.
The horizontal bars show $95$\% MCAP confidence intervals with a small filled circle marking the MLE obtained with algorithmic parameters in table~\SuppTabAlgPars in the supplement.}\label{fig:gompertz}
\end{figure}

For our experiment, we used $N_\unit=100$ simulated observations for each of $\Unit=50$ panel units.
For each $\unit\in1\mycolon\Unit$, we fixed $\kappa_\unit=1$ and $X_{\unit,0}=1$.
We set $\sigma_{\mathrm{G},\unit}=\sigma_{\mathrm{G}}=0.1$ and $r_\unit=r=0.1$.
We estimated the shared parameters $\sigma_{\mathrm{G}}$ and $r$.
We also estimated unit-specific parameters $\tau_{1:\Unit}$ with true values set to $\tau_\unit=0.1$.
We profiled over the shared parameter $\sigma_{\mathrm{G}}$, maximizing with respect to $r$ and the 50 unit-specific parameters $\tau_{1:\Unit}$.
In Figure~\ref{fig:gompertz}, the estimated profile using the marginal step has a log likelihood shortfall of only approximately $3.4 / 51 = 0.07$ log units per parameter.
By contrast, maximization using only the joint step has a shortfall of $28.1 / 51 = 0.6$ per parameter and substantially greater Monte Carlo variability.
This greater variability leads to a larger Monte Carlo adjusted profile cut-off than the asymptotic value of 1.92, and therefore typically produces a wider 95\% confidence interval \citep{ionides17}.

\section{Polio: state-level pre-vaccination incidence in USA}\label{sec:polio}


The study of ecological and epidemiological systems poses challenges involving nonlinear mechanistic modeling of partially observed processes \citep{bjornstad01}.
Here, we illustrate this class of models and data, in the context of panel data analysis, by analyzing state-level historic polio incidence data in USA.
Although introduction of a pathogen into a host community requires contact between communities, the vast majority of infectious disease transmission events have infector and infectee within the same community \citep{bjornstad02}.
Therefore, for the purpose of understanding the dynamics of infectious diseases within communities, one may choose to model a collection of communities as independent conditional on a pathogen immigration process.
For example, fitting a panel model to epidemic data on a collection of geographical regions could permit statistical identification of dynamic mechanisms that cannot readily be detected by the data available in any one region.
Further, differences between regions (in terms of size, climate, and other demographic or geographic factors) may lead to varying disease dynamics that can challenge and inform a panel model.

The massive efforts of the global polio eradication initiative have brought polio from a major global disease to the brink of extinction \citep{patel16}.
Finishing this task is proving hard, and an improved understanding of polio ecology might assist.
\citet{martinez-bakker15} investigated polio dynamics by fitting a mechanistic model to pre-vaccination era data from the USA consisting of monthly reports of acute paralysis from May 1932 through January 1953.
Reports are available for the 48 contiguous US states and Washington D.C., so $\Unit=49$, and henceforth we refer to these units as states.
A sample of the time series in this panel is plotted in the supplement.
\citet{martinez-bakker15} fitted their model separately to each state which, in panel terminology, amounted to a decision to make all parameters unit-specific.
Some parameters, such as duration of infection, might be well modeled as shared between all units.
Other parameters, such as the model for seasonality of disease transmission, should intuitively be slowly varying geographically.
\citet{martinez-bakker15} did not have access to panel inference methodology, and so here we reconsider their model and data and investigate what happens when some parameters become shared between units.

\begin{figure}
\centering
\usetikzlibrary{arrows}
\usetikzlibrary{backgrounds}
\usetikzlibrary{calc}
\usetikzlibrary{fit}
\usetikzlibrary{positioning}
\usetikzlibrary{shapes}
\newcommand\col[2]{#1#2}
\newcommand\Scol{OliveGreen}
\newcommand\Icol{red}
\newcommand\Rcol{blue}
\newcommand\Paramscol{gray}
\newcommand\Bcol{\Scol}
\newcommand\drawIntens{!50}
\newcommand\fillIntens{!20}
\newcommand\textIntens{!99}
\newcommand\compartSize{30pt}
\newcommand\compartDist{2*\compartSize}
\newcommand\compartText{\small}
\newcommand\susComptDistFact{0.2}
\newcommand\susDist{\compartDist*\susComptDistFact}
\newcommand\arrowFromBox{2pt}
\newcommand\braceLineWidth{2pt}

\begin{tikzpicture}
  [scale=1.3,
  node distance=0.5*\compartDist,
  bend angle=25,
  shorten <=\arrowFromBox,shorten >=\arrowFromBox,
  C/.style={thick,inner sep=0pt,minimum size=\compartSize,font=\compartText},
  B/.style={C,shape=circle,draw=\col{\Bcol}{\drawIntens},fill=\col{\Bcol}{\fillIntens}},
  S/.style={B,shape=rectangle,draw=\col{\Scol}{\drawIntens},fill=\col{\Scol}{\fillIntens}},
  I/.style={S,draw=\col{\Icol}{\drawIntens},fill=\col{\Icol}{\fillIntens}},
  R/.style={S,draw=\col{\Rcol}{\drawIntens},fill=\col{\Rcol}{\fillIntens}},
  ghost/.style={B,draw=white,fill=white,opacity=0},
  params/.style={C,shape=ellipse,minimum height=\compartSize,minimum width=2*\compartSize,
  draw=\col{\Paramscol}{\drawIntens},fill=\col{\Paramscol}{\fillIntens}},
  fromto/.style={->,>=stealth',semithick,auto}]
   
    \node[B] (B)                {$B_{\unit}$};
    \node[S] (S1) [right=of B]  {$S_{\unit,1}^B$}; \draw[fromto] (B) to (S1);
    \node[S] (S2) [right=\susDist of S1] {$S_{\unit,2}^B$}; \draw[fromto] (S1) to (S2);
    \node[S] (S3) [right=\susDist of S2] {$S_{\unit,3}^B$}; \draw[fromto] (S2) to (S3);
    \node[S] (S4) [right=\susDist of S3] {$S_{\unit,4}^B$}; \draw[fromto] (S3) to (S4);
    \node[ghost] (S) at ($(S3)!0.5!(S4)$) {};
    \node[S] (S5) [right=\susDist of S4] {$S_{\unit,5}^B$}; \draw[fromto] (S4) to (S5);
    \node[S] (S6) [right=\susDist of S5] {$S_{\unit,6}^B$}; \draw[fromto] (S5) to (S6);
    \node[S] (SO) [right=of S6] {$S_{\unit}^{\hspace{0.2mm}O}$}; \draw[fromto] (S6) -- (SO);
    \node[I] (IB) [below=of S] {$I_{\unit}^B$};
    \draw[fromto,bend right] (S1) to (IB);
    \draw[fromto,bend right] (S2) to (IB);
    \draw[fromto,bend right] (S3.south) to (IB.north west);
    \draw[fromto,bend left]  (S4.south) to (IB.north east);
    \draw[fromto,bend left]  (S5) to (IB);
    \draw[fromto,bend left]  (S6) to (IB);
    \node[I] (IO) [below=of SO] {$I_{\unit}^{\hspace{0.2mm}O}$}; \draw[fromto] (SO) to (IO);
    \node[ghost] (I) at ($(IB)!0.5!(IO)$) {};
    \node[I] (Y)  [right=of IO,shape=circle] {$Y_{\unit}$}; 
    \draw[fromto,dashed] (IO) to (Y);
    \node[R] (R)  [below=of I]  {$R_{\unit}$}; \draw[fromto,bend right] (IB) to (R.west); \draw[fromto,bend left] (IO) to (R.east);
    
    \node[ghost] (param)  [below=of B] {};
    \node[params] (params)  [below=of param] {$\phi, \psi_\unit$}; 
    \node (frame) [draw=black, fit= (B) (Y) (R) (params), inner sep=0.50cm, ultra thick] {};
        \node [yshift=3.0ex, xshift=-10.0ex] at (frame.south east) {\compartText $\bm{\unit \in 1:\Unit}\hspace{2mm}$};
\end{tikzpicture}
\caption{Flow diagram for the polio panel model. 
Each individual resides in exactly one of the compartments denoted by square boxes.
Solid arrows represent possible transitions into a new compartment.
Circles represent observed variables: the reported incidence, $Y_{\unit}$, and births, $B_{\unit}$.
The dependence of $Y_\unit$ on $I^O_\unit$ is denoted by a dashed arrow.
Colors and rows represent disease status: unexposed is green (top row); infected is red (middle row); recovered is purple (bottom row).
The panel structure is indicated by the replication of this model over $\unit\in 1{\mycolon}\Unit$.
The shared parameter vector $\phi=(\rho,\sigma_{\mathrm{dem}},\psi,\tau)$ and unit-specific parameter $\psi_\unit=(b_{\unit,1:6},\sigma_{\mathrm{env},\unit},\tilde S^O_{\unit,0},\tilde I^O_{\unit,0})$ are identified in the gray ellipse.
}\label{fig:polio-diagr}
\end{figure}

\begin{figure}\label{fig:polio}
\begin{knitrout}
\definecolor{shadecolor}{rgb}{1, 1, 1}\color{fgcolor}

{\centering \includegraphics[width=4in]{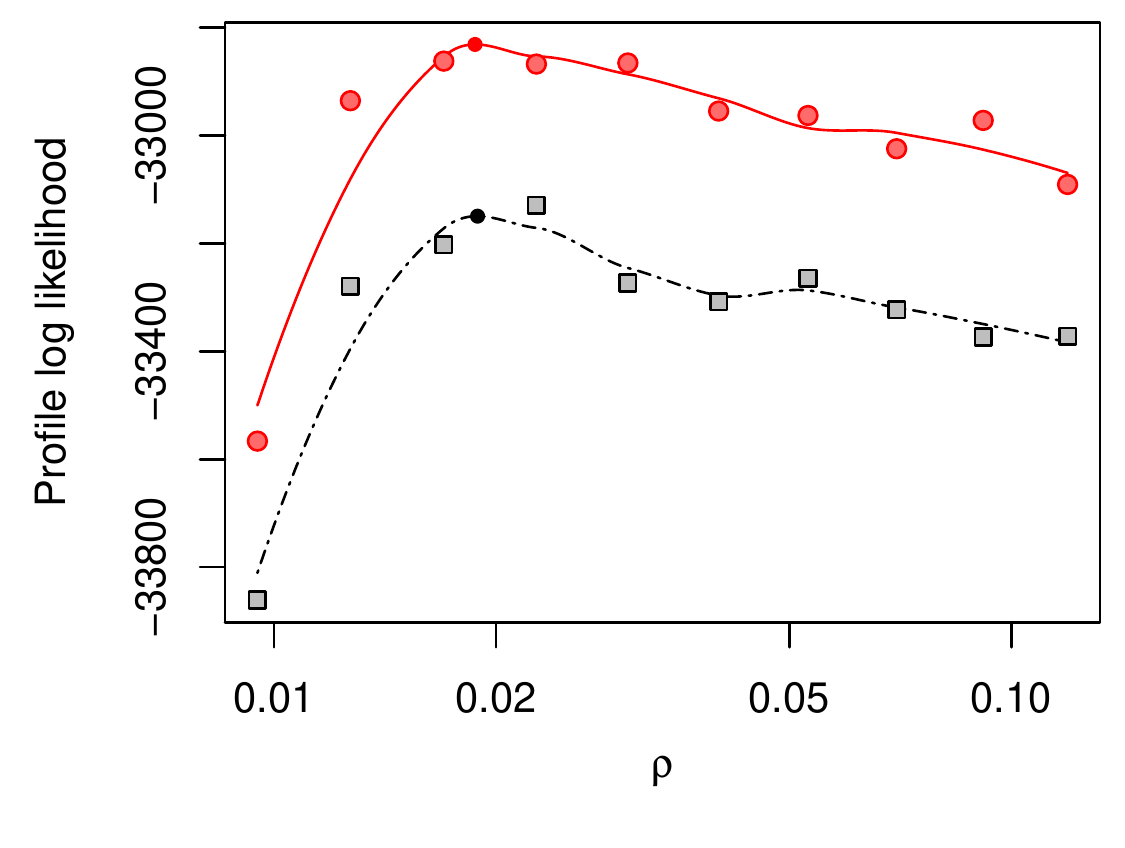} 

}

\end{knitrout}
\negBelowProfile
\caption{Profile log likelihood of $\rho$ for the polio model, computed with marginal maximization for unit-specific parameters (red circles and line) and without (black squares and line) with algorithmic parameters in table~\SuppTabAlgPars in the supplement.
Figure~\ref{fig:polio2} gives a closer look in a neighborhood of the maximum and constructs a confidence interval.}\label{fig:polio}
\end{figure}

The model of \citet{martinez-bakker15} places each individual in the population into one of ten compartments: susceptible babies in each of six one-month birth cohorts ($S^B_1$,...,$S^B_6$), susceptible older individuals ($S^O$), infected babies ($I^B$), infected older individuals ($I^O$), and individuals who have recovered with lifelong immunity ($R$).
Our {\PanelPOMP} version of their model has a latent process determining the number of individuals in each compartment at each time for each unit $\unit\in 1\mycolon\Unit$.
We write
\begin{equation} \nonumber
X_\unit(t)=\big(S^B_{\unit,1}(t),...,S^B_{\unit,6}(t), I^B_\unit(t),S^O_\unit(t),I^O_\unit(t),R_\unit(t) \big).
\end{equation}
The flows through the compartments are graphically represented in Figure~\ref{fig:polio-diagr}.
Births for each state $\unit$ are treated as a covariate time series, known from census data \citep{martinez-bakker14}.
Babies under six months are modeled as fully protected from paralytic polio, but capable of a gastro-intestinal polio infection.
Infection of an older individual leads to a reported paralytic polio case with probability $\rho_\unit$.

Since duration of infection is comparable to the one-month reporting aggregation, a discrete time model may be appropriate.
The model is therefore specified only at times $t_{\unit,n}=t_n=1932+ (4+n)/12$ for $n=0,\dots,249$.
We write
\begin{equation}\nonumber
X_{\unit,n}=X_\unit(t_n)=\big(S^B_{\unit,1,n},...,S^B_{\unit,6,n},S^O_{\unit,n}, I^B_{\unit,n},I^O_{\unit,n},R_{\unit,n} \big).
\end{equation}
The mean force of infection, in units of $\mathrm{yr}^{-1}$, is modeled as
\begin{equation}\nonumber
\bar\lambda_{\unit,n}=\left( \beta_{\unit,n} \frac{I^O_{\unit,n}+I^B_{\unit,n}}{P_{\unit,n}} + \psi \right),
\end{equation}
where $P_{\unit,n}$ is a census population covariate for state $u$ interpolated to time $t_n$ and seasonality of transmission is modeled as
\begin{equation}\nonumber
\beta_{\unit,n}=\exp\left\{ \sum_{k=1}^K b_{\unit,k}\, \xi_{k}(t_n) \right\},
\end{equation}
with $\{\xi_k(t),k=1,\dots,K\}$ being a periodic B-spline basis.
We set $K=6$. The force of infection has a stochastic perturbation,
$$\lambda_{\unit,n} = \bar\lambda_{\unit,n} \epsilon_{\unit,n},$$
where $\epsilon_{\unit,n}$ is a Gamma random variable with mean 1 and variance $\sigma^2_{\mathrm{env},\unit} + \sigma^2_{\mathrm{dem},\unit}\big[\lambda_{\unit,n}]^{-1}$.
These two terms capture variation on the environmental and demographic scales, respectively \citep{breto09}.
All compartments suffer a mortality rate, set at $\delta=[60\, \mathrm{yr}]^{-1}$ for all states.
Within each month, all susceptible individuals are modeled as having exposure to constant competing hazards of mortality and polio infection.
The chance of remaining in the susceptible population when exposed to these hazards for one month is therefore
$$p_{\unit,n} = \exp\left\{-\frac{\delta+\lambda_{\unit,n}}{12}\right\},$$
with the chance of polio infection being
$$q_{\unit,n} = (1-p_{\unit,n})\,\frac{\lambda_{\unit,n}}{\lambda_{\unit,n}+\delta}.$$
We employ a continuous population model: writing $B_{\unit,n}$ for births in month $n$ for state $\unit$, the full set of model equations is,
$$\begin{aligned}
 I^B_{\unit,n+1} &= q_{\unit,n} \sum_{k=1}^6 S^B_{\unit,k,n}, & \hspace{3mm} S^B_{\unit,1,n+1} &= B_{\unit,n+1},\hspace{8mm}  S^B_{\unit,k,n+1} &= p_{\unit,n}S^B_{\unit,k-1,n} \hspace{2mm} \mbox{for $k\in 1{\mycolon}6$,}\\
I^O_{\unit,n+1} &= q_{\unit,n} S^O_{\unit,n}, &
S^O_{\unit,n+1} &= p_{\unit,n}(S^O_{\unit,n}+S^B_{\unit,6,n}). 
\end{aligned}$$
The model for the reported observations, conditional on the latent process, is a discretized normal distribution truncated at zero, with both environmental and Poisson-scale contributions to the variance:
$$Y_{\unit,n}= \max\{\mathrm{round}(Z_{\unit,n}),0\}, \quad Z_{\unit,n}\sim\mathrm{Normal}\left(\rho I^O_{\unit,n}, \rho_\unit I^O_{\unit,n}+\left(\tau_\unit I^O_{\unit,n}\right)^2\right),$$
where $\mathrm{round}(x)$ obtains the integer closest to $x$.
Additional parameters are used to specify initial values for the latent process at time $t_0=1932+ 4/12$.
We will suppose there are parameters $\big(\tilde S^B_{\unit,1,0},...,\tilde S^B_{\unit,6,0}, \tilde I^B_{\unit,0},\tilde I^O_{\unit,0},\tilde S^O_{\unit,0}\big)$ that specify the population in each compartment at time $t_0$ via
$$ S^B_{\unit,1,0}= {\tilde S}^B_{\unit,1,0} ,...,S^B_{\unit,6,0}= \tilde S^B_{\unit,6,0}, \quad I^B_{\unit,0}= P_{\unit,0} \tilde I^B_{\unit,0},\quad S^O_{\unit,0}= P_{\unit,0} \tilde S^O_{\unit,0}, \quad I^O_{\unit,0}= P_{\unit,0} \tilde I^O_{\unit,0}.$$
The initial conditions are simplified by ignoring infant infections at time $t_0$.
Thus, we set $\tilde I^B_{\unit,0}=0$ and use monthly births in the preceding months (ignoring infant mortality) to fix $\tilde S^B_{\unit,k,0}=B_{\unit,1-k}$ for $k=1,\dots,6$.
The estimated initial conditions for state $\unit$ are then defined by the two parameters $\tilde I^O_{\unit,0}$ and $\tilde S^O_{\unit,0}$, since the initial recovered population, $R_{\unit,0}$, is specified by subtraction of all the other compartments from the total initial population, $P_{\unit,0}$.




\begin{figure}\label{fig:polio2}
\begin{knitrout}
\definecolor{shadecolor}{rgb}{1, 1, 1}\color{fgcolor}

{\centering \includegraphics[width=3.3in]{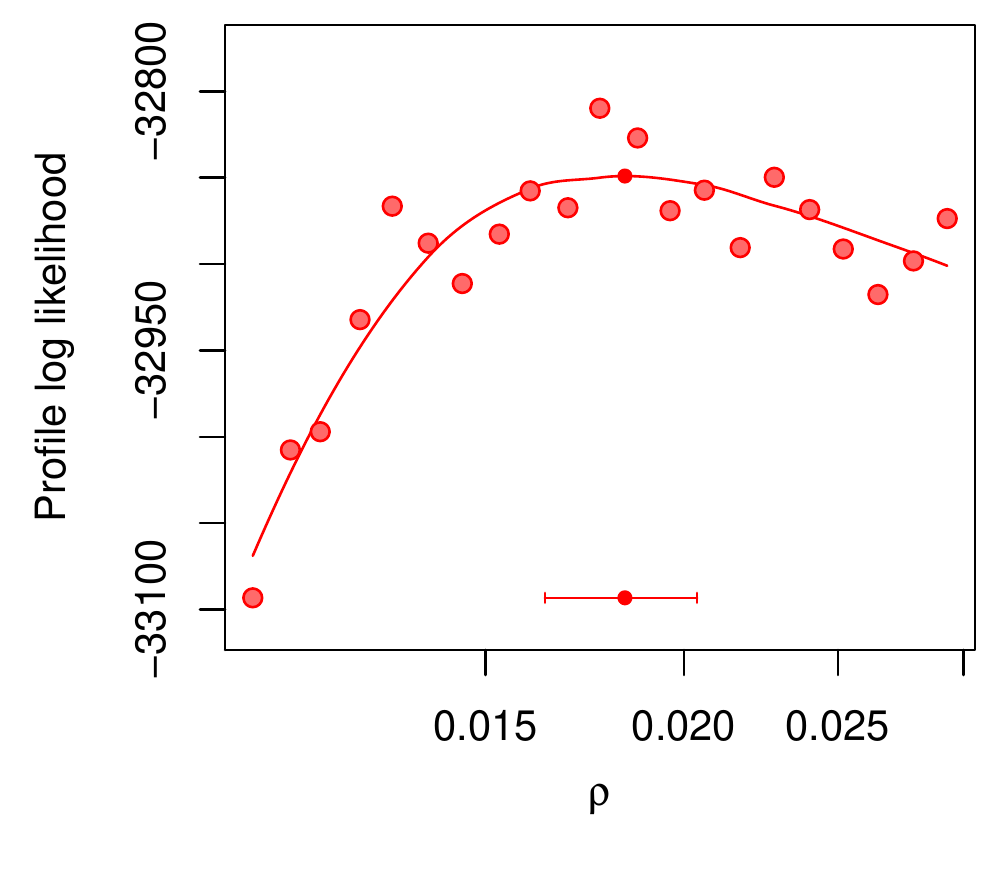} 

}

\end{knitrout}
\negBelowProfile
\caption{Profile log likelihood of $\rho$ for the polio model, computed in a neighborhood of the MLE.
The horizontal bar shows a $95$\% MCAP confidence interval with a small filled circle marking the MLE obtained with algorithmic parameters in table~\SuppTabAlgPars in the supplement.
}\label{fig:polio2}
\end{figure}

Figure~\ref{fig:polio} shows the profile likelihood for the shared reporting rate, $\rho$, evaluated across a wide interval to investigate large-scale features of the likelihood surface.
Including a marginal maximization step in PIF leads to gains in agreement with the findings of Figure~\ref{fig:gompertz}.
Figure~\ref{fig:polio} indicates an MLE around 0.02 and so this parameter range was studied further in a higher resolution profile in Figure~\ref{fig:polio2}.
On this localized plot, we can see Monte Carlo error of order 10 log units in the maximization and evaluation of the log likelihood.
Since construction of this plot employed
$528.0$
core days of computational effort, we had limited capacity for further reductions in Monte Carlo error by further increases in computation.
Fortunately, MCAP methodology is able to handle Monte Carlo error on this scale: see, for example, the noisiest profile in Figure~\ref{fig:gompertz} using only joint maximization.
The resulting $95$\% confidence interval from Figure~\ref{fig:polio2} is $(0.016,0.020)$, which is consistent with estimates for the fraction of polio infections leading to acute paralysis in USA in this era \citep{melnick53}.
By contrast, \citet{martinez-bakker15} found point estimates ranging from 0.0025 to 0.03 when analyzing each state separately, with wide confidence intervals evident from the profiles in their Figures~S9--S17.

\newcommand\othersFail{Likelihood-based inference for data on this scale ($U=49$, $N_u=249$) has been considered intractable for general nonlinear PanelPOMP models using previous methodology. Indeed, even for a single observed time series, inference for general nonlinear POMP models has only recently become routine \citep{king16}. Evidently, the difficulties are a result of model complexity as much as the sheer volume of data. The interaction of model complexity with a modest increase in data size is the current challenge.}%
\revision{\othersFail}

\section{Dynamic variation in sexual contact rates}\label{sec:hiv}


We demonstrate PIF for analysis of panel data on sexual contacts, using the model and data of \citet{romero-severson15}.
The data consist of many short time series, a common situation in classical longitudinal analysis.
We show that PIF provides useful flexibility to permit consideration of scientifically relevant dynamic models including latent dynamic variables. 

Basic population-level disease transmission models suppose equal contact rates for all individuals in a population \citep{keeling08}.
Sometimes these models are extended to permit heterogeneity between individuals.
Heterogeneity within individuals over time has rarely been considered, yet, there have been some indications that dynamic behavioral change plays a substantial role in the HIV epidemic.
\citet{romero-severson15} quantified dynamic changes in sexual behavior by fitting a model for dynamic variation in sexual contact rates to panel data from a large cohort study of HIV-negative gay men \citep{vittinghoff99}.
Here, we analyze the data on total sexual contacts over $N_\unit=4$ consecutive 6-month periods for the 
$\Unit=882$ men in the study who had no missing observations.
A sample of the time series in this panel is plotted in the supplement.

For behavioral studies, we interpret ``mechanistic model'' broadly to mean a mathematical model describing phenomena of interest via interpretable parameters.
In this context, we want a model that can describe (i) differences between individuals; (ii) differences within individuals over time; (iii) flexible relationships between mean and variance of contact rates.
\citet{romero-severson15} developed a {\PanelPOMP} model capturing these phenomena.
Suppose that each individual $\unit\in 1\mycolon\Unit$ has a latent time-varying rate $\contactsRate_\unit(t)$ of making a sexual contact.
Each data point, $y^*_{\unit,n}$, is the number of reported contacts for individual $\unit$ between time $t_{\unit,n-1}$ and $t_{\unit,n}$.
Integrating the unobserved process $\{\contactsRate_\unit(t)\}$ gives the conditional expected value in~\eqref{eq:contacts-dmeas} of contacts for individual $\unit$ in reporting interval $n$, via
$$C_{\unit,n}= \alpha^{n-1}\int_{t_{\unit,n-1}}^{t_{\unit,n}} \contactsRate_\unit(t)\, dt,$$
where $\alpha$ is an additional secular trend that accounts for the observed decline in reported contacts.
A basic stochastic model for homogeneous count data would model $y^*_{\unit,n}$ as a Poisson random variable with mean and variance equal to $C_{\unit,n}$
\citep{keeling08}.
However, the variance in the data is much higher than the mean \citep{romero-severson12}.
Negative binomial processes provide a route to modeling dynamic over-dispersion \citep{breto11}.
Here, we suppose that
\begin{equation}\label{eq:contacts-dmeas}
Y_{\unit,n} | C_{\unit,n},D_\unit \sim \mathrm{NegBin}\, \left(C_{\unit,n},D_{\unit}\right),
\end{equation}
a conditional negative binomial distribution with mean $C_{\unit,n}$ and  variance $C_{\unit,n}+C_{\unit,n}^2[D_\unit]^{-1}$.
Here, $D_\unit$ is called the dispersion parameter, with the Poisson model being recovered in the limit as $D_\unit$ becomes large.
The dispersion, $D_\unit$, can model increased variance compared to the Poisson distribution for individual contacts, but does not result in autocorrelation between measurements on an individual over time, which is observed in the data.
To model this autocorrelation, we suppose that individual $\unit$ has behavioral episodes of exponentially distributed duration within which $\{\contactsRate_\unit(t)\}$ is constant, but the individual enters new behavioral episodes at rate $\mu_R$.
At the start of each episode, $\{\contactsRate_\unit(t)\}$ takes a new value drawn from a Gamma distribution with mean $\mu_X$ and standard deviation $\sigma_X$.
Therefore, at each time $t$,
$$\contactsRate_\unit(t)\sim \mbox{Gamma}(\mu_X, \sigma_X).$$
To complete the model, we also assume $D_\unit \sim \mbox{Gamma}(\mu_D, \sigma_D)$.
The parameter vector is $\theta=(\mu_X, \sigma_X, \mu_D,\sigma_D, \mu_R,\alpha)$.

\begin{figure}\label{fig:contacts}
\begin{knitrout}
\definecolor{shadecolor}{rgb}{1, 1, 1}\color{fgcolor}

{\centering \includegraphics[width=3in]{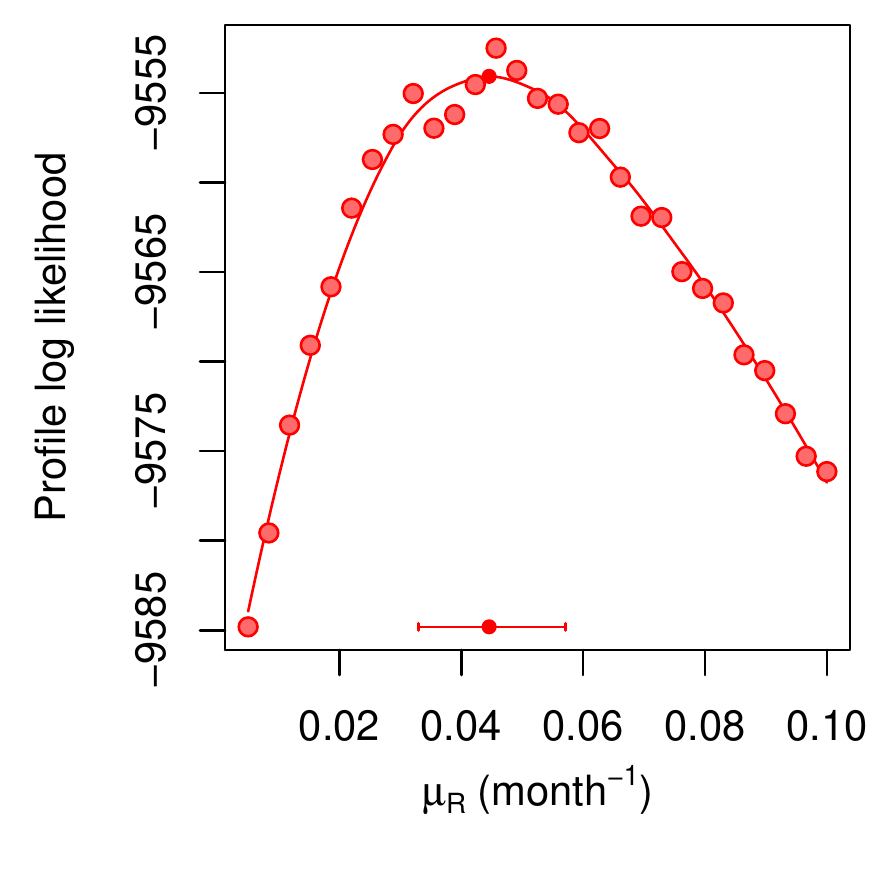} 

}

\end{knitrout}
\negBelowProfile
\caption{Profile log likelihood of $\mu_R$ for a panel of size $\Unit=882$ for the contacts model.
The horizontal bar shows a $95$\% MCAP confidence interval with a small filled circle marking the MLE obtained with algorithmic parameters in table~\SuppTabAlgPars in the supplement.}\label{fig:contacts}
\end{figure}

Figure~\ref{fig:contacts} constructs a profile likelihood confidence interval for $\mu_R$. 
This result is comparable to Web Figure~1 of \citet{romero-severson15}, however, here we have shown how this example fits into a general methodological framework.
The profile demonstrates an intermediate level of computational challenge between the relatively simple Gompertz example of Section~\ref{sec:gompertz} and the extensive data, complex model, and correspondingly larger Monte Carlo computations of Section~\ref{sec:polio}.
Figure~\ref{fig:contacts} took
$68.7$
core days to compute.
No marginal maximization was necessary for this example, since all parameters were shared between all units.

\section{Discussion} \label{sec:dis}

When panel data are short, relative to the complexity of the model under consideration, there may be little information in the data about each unit-specific parameter.
In such cases, it can be appropriate to replace some unit-specific parameters by latent random variables.
By analogy with linear regression analysis, these unit-specific latent random variables are called {\it random effects}, and unit-specific parameters treated as unknown constants are called {\it fixed effects}.
Models with random effects are also called {\it hierarchical} models, since an additional hierarchy of modeling is required to describe the additional latent variables, and parameters of the random effect distribution are consequently termed {\it hyperparameters}.
From the perspective of statistical inference, using random effects reduces the number of fitted parameters, at the expense of adding additional modeling assumptions.
From the perspective of computation, random effects reduce the dimensionality of the likelihood optimization challenge, while increasing the dimensionality of the latent space which must be integrated over to evaluate the likelihood.
The use of random effects provides an opportunity for the estimation of individual unit-specific effects to borrow strength from other panel units, via estimation of the hyperparameters.
Therefore, random effects can have particular value if one is interested in estimating the unit-specific effects.
However, when the research question is focused on shared parameters or higher-level model structure decisions such as whether a parameter should be included in the model, the individual unit-specific parameters can be a distraction.
Rather than spending time developing and justifying a distribution for the random effects, simpler statistical reasoning can be obtained by avoiding these issues and employing fixed effects.

The sexual contact model has random effects $D_\unit$ and has no fixed effects.
As discussed above, this is appropriate for panel data with very short time series.
By contrast, the polio data are relatively long time series, enabling the use of fixed effects.

Panel time series analysis shares similarities with functional data analysis \citep{ramsay97}. 
Within functional data analysis, a representation of dynamic mechanisms can be incorporated via principal differential analysis \citep{ramsay96}.
Partially observed stochastic dynamic models are not within the usual scope of the field of functional data analysis, though there is no need for a hard line separating functional data analysis from panel data analysis.
   
We wrote an R package {\texttt{panelPomp}} (available at \url{https://github.com/cbreto/panelPomp}) that provides a software environment for developing methodology and data analysis tools for {\PanelPOMP} models extending the {\texttt{pomp}} package \citep{king16}.
The implementation of PIF in {\texttt{panelPomp}} was used for the results of this paper.
PIF and the {\texttt{panelPomp}} package have already proved useful for scientific investigations \citep{ranjeva17,ranjeva18}.

Iterated filtering algorithms provide an approach to plug-and-play full-information likelihood-based inference that has been applied in challenging nonlinear mechanistic time series analyses, especially in epidemiology \citep[reviewed by][]{breto18}.
Reduced information methods, such as those using simulations to compare the data with a collection of summary statistics, can lead to substantial losses in statistical efficiency \citep{fasiolo16,shrestha11}.
Particle Markov chain Monte Carlo \citep{andrieu10} provides a route to plug-and-play full-information Bayesian inference, but the methodology requires computational feasibility of log likelihood estimates with a standard error of around 1 log unit \citep{doucet15pmcmc}.
PIF is the first plug-and-play full-information likelihood-based approach demonstrated to be applicable on general partially observed nonlinear stochastic dynamic models for panel data analysis on the scale we have considered.

\newpage
\setcounter{equation}{0}
\setcounter{figure}{0}
\setcounter{page}{1}
\setcounter{section}{0}
\setcounter{table}{0}

\noindent {\Large \bf Supplementary Content}
\renewcommand{\refname}{Supplementary References}
\renewcommand\thefigure{S-\arabic{figure}}
\renewcommand\thetable{S-\arabic{table}}
\renewcommand\thepage{S-\arabic{page}}
\renewcommand\thesection{S\arabic{section}}
\renewcommand\theequation{S\arabic{equation}}

\newcommand\constantell{\ell_{\begin{picture}(-2,2)(-2,-2)\circle*{2}\end{picture}}}

\section{Alternative POMP representations of a PanelPOMP}

Section~2.1 of the main text developed a partially observed Markov process (POMP) representation of a {\PanelPOMP}, which we call construction~R1.
The following constructions, {R\ref{R1}} and {R\ref{R2}}, provide two alternative ways to write a PanelPOMP as a POMP. 

\newcounter{R}
\renewcommand\listR{R}

\begin{list}{(\listR\arabic{R})}{\usecounter{R}\conditionList\stepcounter{R}}

\item\label{R1}
For a panel in which each unit is observed over the same time interval, we can write
$$ X^{[\ref{R1}]}(t) = \big(X_{1}(t),X_2(t),\dots,X_\Unit(t)\big).$$
This constructs a POMP by concatenating the latent state vectors for each separate unit of the PanelPOMP.
The dimension of the resulting latent process increases with the number of panel units, $\Unit$.
Sequential Monte Carlo (SMC) methods struggle with high-dimensional latent processes \citep{bengtsson08}.
This representation is therefore anticipated to be useful for SMC based methodology only when $\Unit$ is small.

\vspace{3mm}

\item\label{R2}
The latent process for a {\PanelPOMP} model need only be specified at the observation times. Therefore, we can define an equivalent integer-time POMP model,
$$ X^{[\ref{R2}]}(\unit)=\big(X_{\unit,0},X_{\unit,1},\dots,X_{\unit,N_\unit}\big).$$
The dynamics in this POMP model are trivial: $X^{[\ref{R2}]}(i)$ is independent of $X^{[\ref{R2}]}(j)$ for $i\neq j \in 1\mycolon \Unit$.
Due to the `curse of dimensionality' for importance sampling, this representation is useful for SMC based methodology only when all of $N_1,\dots,N_\Unit$ are small.
This representation can provide a simple way to apply existing POMP methodology to panel data, and for that reason it was adopted by \citet{romero-severson15}.
\end{list}

The only reasons of which we are aware to give preference to {R\ref{R1}} or {R\ref{R2}} over R1 are  small potential gains in conceptual and coding simplicity.
However, the scaling difficulties faced by both {R\ref{R1}} and {R\ref{R2}} make them inappropriate for general-purpose methodology and software based on SMC.

\section{Estimators for the likelihood of a \PanelPOMP}

Consider $\Nrep \geq 2$ independent particle filters, each with $\Np \geq 1$ particles, which give independent Monte Carlo likelihood estimators $\mclik{\unit}{\nrep}, \nrep\in 1{\mycolon}\Nrep$, for each unit.
We work with a constant parameter value $\theta$ and write $\ell_\unit(\theta)=\ell_\unit$. 
The Monte Carlo estimator is unbiased and has finite variance, written as
\begin{equation}\nonumber
\E[\mclik{\unit}{\nrep}] = \ell_{\unit},  \qquad  \var[\mclik{\unit}{\nrep}] = \sigma^2_{\unit} < \infty.
\end{equation}
A corresponding estimator of the full panel likelihood based on replication $\nrep$ is 
\begin{equation}\nonumber
\mclik{}{\nrep} = \prod_{\unit=1}^{\Unit} \mclik{\unit}{\nrep},
\end{equation}
which has mean and variance given by
\begin{equation}\nonumber
\E[\mclik{}{\nrep}] = \ell \, , 
\hspace{15mm}
\var[\mclik{}{\nrep}] 
= \prod_{\unit=1}^{\Unit} \Bigl\{ \sigma^2_{\unit} + \ell_{\unit}^2 \Bigr\} - \ell^2.
\end{equation}
A natural approach to combining the $\Nrep$ independent likelihood estimators for estimation of the likelihood of a single panel unit $\unit$ is
\begin{equation}\nonumber
\mcUnit_{\unit} = \frac{1}{\Nrep}\sum_{\nrep=1}^{\Nrep}{\mclik{\unit}{\nrep}},
\end{equation}
which has mean and variance given by
\begin{equation}\nonumber
 \E[\mcUnit_{\unit}] = \ell_{\unit} \, ,
\hspace{15mm} 
\var[\bar{L}_{\unit}] = \Nrep^{-1}\sigma^2_{\unit}. 
\end{equation}
However, it is not immediately clear how to combine the unit-level likelihood estimators to estimate the likelihood of the entire panel.
We consider two estimators,
\begin{equation}\nonumber
\likA = \frac{1}{\Nrep}\sum_{\nrep=1}^{\Nrep}{\mclik{}{\nrep}},
\hspace{20mm}
\likB = \prod_{\unit=1}^\Unit \mcUnit_{\unit}.
\end{equation}
While both estimators are unbiased, $\likA$ is less efficient than $\likB$.
To see this, consider first their variances,
\begin{eqnarray}
\label{eq:varA}\var[\likA] &=& \frac{1}{\Nrep}\left[
\,
\prod_{\unit=1}^{\Unit} \Bigl\{ \sigma^2_{\unit} + \ell_{\unit}^2 \Bigr\} - \ell^2
\right]
\\
\nonumber\var[\likB] &=& \E\left[ 
\,
\prod_{\unit=1}^\Unit \mcUnit_{\unit}
  \right]^2 - \ell^2
\\
\label{eq:varB}&=& \prod_{\unit=1}^\Unit \Bigl\{
  \frac{\sigma^2_\unit}{\Nrep} + \ell_\unit^2
  \Bigr\}  - \ell^2.
\end{eqnarray}
Expanding the product in \eqref{eq:varA} yields
\begin{eqnarray} \label{eq:2varA}
\var[\likA] &=& \frac{1}{\Nrep}\left[
\sum_{k_{1:\Unit}\in\{0,1\}^\Unit}
\left\{
  \prod_{\unit=1}^\Unit
    \sigma_\unit^{2k_\unit} \ell_\unit^{2(1-k_\unit)}
\right\} - \ell^2
\right].
\end{eqnarray}
The term $\ell^2$ in \eqref{eq:2varA} cancels with the summand for $k_{1:\Unit}=(0,0,\dots,0)$, giving
\begin{eqnarray}\label{likA:expansion}
\var[\likA] &=&
\sum_{k_{1:\Unit}\in\{0,1\}^\Unit\backslash \{0\}^\Unit}
\left\{
\,
\frac{1}{\Nrep}
\,
  \prod_{\unit=1}^\Unit
    \sigma_\unit^{2k_\unit} \ell_\unit^{2(1-k_\unit)}
\right\}.
\end{eqnarray}
An analogous expression for $\var[\likB]$, derived from \eqref{eq:varB}, is
\begin{eqnarray}\label{likB:expansion}
\var[\likB] &=&
\sum_{k_{1:\Unit}\in\{0,1\}^\Unit\backslash \{0\}^\Unit}
\left\{
  \left[\,
  \prod_{\unit=1}^\Unit
  \frac{1}{\Nrep^{k_\unit}}
\right]
  \prod_{\unit=1}^\Unit
    \sigma_\unit^{2k_\unit} \ell_\unit^{2(1-k_\unit)}
\right\}.
\end{eqnarray}
Comparing equivalent terms in the sums for~\eqref{likA:expansion} and~\eqref{likB:expansion} we see that, supposing that either variance is strictly positive (which, from the unbiasedness of the likelihood estimator, implies that $\sigma^2_\unit$ and $\ell_\unit(\theta)$ are strictly positive) and given that $\Nrep>1$,
\begin{equation}\nonumber
\var[\likB] <\var[\likA].
\end{equation}
For a quantitative comparison of $\likA$ and $\likB$, consider the situation with constant likelihood
\[\ell_\unit=\constantell\]
and constant variance, $\sigma^2_\unit=\sigma^2$. 
Then,
\begin{equation} \label{lik:comparison}
\var[\likA]= \frac{1}{\Nrep}\Big[
 \big(\sigma^2+\constantell^2\big)^\Unit - \constantell^{2\Unit}
\Big],
\hspace{15mm}
\var[\likB] = \Big(
\frac{\sigma^2}{\Nrep} + \constantell^2
\Big)^\Unit - \constantell^{2\Unit}.
\end{equation}
Further, suppose we are interested in the relative likelihood so we can scale to $\constantell=1$. 
Now, if $\Nrep=c\Unit$ for some constant $c$, then we see from \eqref{lik:comparison} that $\var[\likB]$ is stable as $\Unit\to\infty$ whereas $\var[\likA]$ increases exponentially. 

\section{Considerations for likelihood shortfall}

\newcommand\shortfallA{The shortfall is expected to be hard to calculate outside of toy problems. To quantify this, we would have to know the actual profile likelihood function, but the motivation for using this methodology is that it provides us with the best available approximation to this function. Nevertheless, theoretical and empirical approaches can give us some relevant insights.}%
\newcommand\shortfallB{As a profile interval becomes narrower, the variation of the bias across the confidence interval converges to zero since the Monte Carlo optimization and likelihood evaluation problems solved for each profile point become increasingly similar. In the limit when the confidence interval collapses toward a single point, the bias becomes trivially constant over the relevant part of the profile. The interval width for a shared parameter in large panels can be expected to be short, since the accumulated information over a large number of panels is anticipated to result in profiles with tall narrow peak. On the other hand, the level of Monte Carlo noise in small panels can be expected to be closer to that in multivariate time series. For time series analysis, a body of existing empirical analyses \citep[reviewed by][]{breto18} suggests it is often feasible to apply sufficient computational effort to make Monte Carlo error small, avoiding the need for any Monte Carlo adjustment.}%
\newcommand\shortfallC{From an empirical perspective, simulation studies can address coverage of the constructed confidence intervals. Simulations can provide a combined assessment of all the assumptions and computational approximations involved in a methodology. We have not focused on this holistic assessment in the current article, since we are primarily concerned with how to compute likelihood-based inferences rather than the topic of statistical properties such as coverage, size and power for likelihood-based inferences. This has led us to present results that check the particular computational issues of profile likelihood shortfall, in a specific case where this can readily be done, rather than considering overall statistical performance.}%
\revision{\shortfallA}

\revision{\shortfallB}

\revision{\shortfallC}

\section{Model misspecification and model selection}

\newcommand\modelDiagnosticsAA{The generality of the PIF algorithm gives the scientist many options for diagnosing model misspecification and developing improved models. Since PIF is not constrained to a particular model, the data analyst is encouraged to consider and compare a wide range of models. Likelihood-based techniques such as Akaike's information criterion and likelihood ratio tests are available for model selection. Models can also be compared by assessing whether simulations from the fitted model capture features of interest.}%
\newcommand\modelDiagnosticsA{As PIF builds on sequential Monte Carlo (SMC), existing diagnostic methods for this methodology are available. A widely used SMC diagnostic tool is to compute an effective Monte Carlo sample size for each data point \citep{king16}. Observations with low effective sample size may be outliers, or may be hard to predict for some other reason. Effective sample size also plays a role for diagnosing successful Monte Carlo convergence, since one of the symptoms of an outlier is that the measurement is rare under the postulated model and so a large Monte Carlo sample is needed to accommodate the unexpected observation.}%
\newcommand\modelDiagnosticsB{Finding models with appropriate stochasticity to explain the data can be a critical aspect of effective data analysis. Both the latent process and the measurement model are open to critical assessment and improvement within the general PanelPOMP framework permitted by PIF. Some relevant issues on this are discussed by \citet{he10} and \citet{breto11} in the context of time series analysis.}%
\newcommand\modelDiagnosticsC{For panel data, the growing size of datasets can add difficulty to viewing diagnostic plots; sometimes one must look for creative summary statistics of the full set of diagnostics for each data point. The \texttt{panelPomp} R package provides a software environment to facilitate model development and diagnostics, extending the capabilities of \texttt{pomp} \citep{king16}.}%
\revision{\modelDiagnosticsAA}

\revision{\modelDiagnosticsA}

\revision{\modelDiagnosticsB}

\revision{\modelDiagnosticsC}

\section{Graphs for subsets of the polio and contacts data}

\begin{figure}[H]
\begin{knitrout}
\definecolor{shadecolor}{rgb}{1, 1, 1}\color{fgcolor}

{\centering \includegraphics[width=5in]{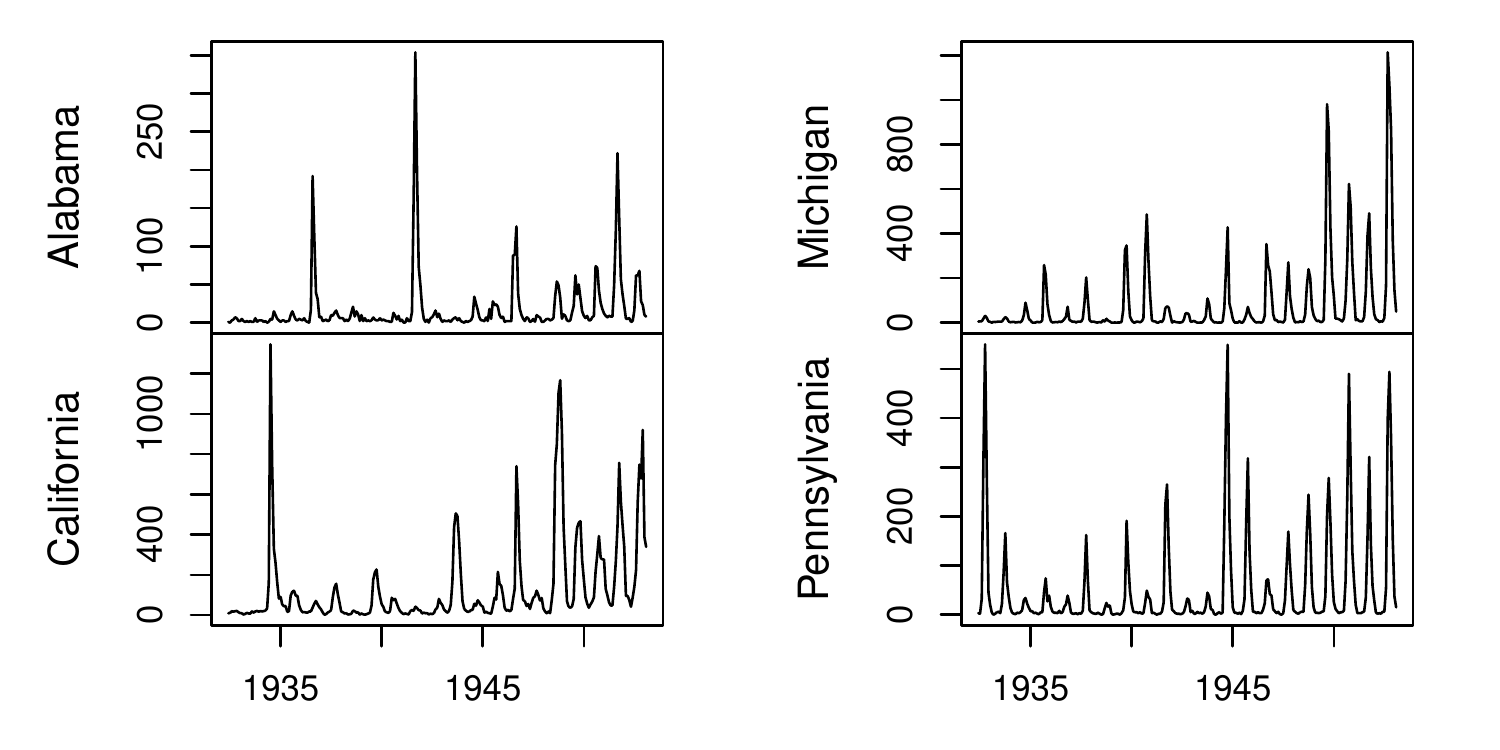} 

}

\end{knitrout}
\caption{Selection of $4$ time series from the panel dataset from \cite{martinez-bakker15} giving USA monthly of acute paralysis from polio from May 1932 through January 1953 for the 48 continuous US states and Washington D.C.
Birth data are missing for South Dakota (before January 1933) and Texas (before January 1934) and so these states were modeled over a reduced time interval.
The full data can be accessed from the \texttt{panpol} panelPomp object included in the \texttt{panelPomp} package.}\label{fig:polio-data}
\end{figure}

\begin{figure}[H]
\begin{knitrout}
\definecolor{shadecolor}{rgb}{1, 1, 1}\color{fgcolor}

{\centering \includegraphics[width=3.5in]{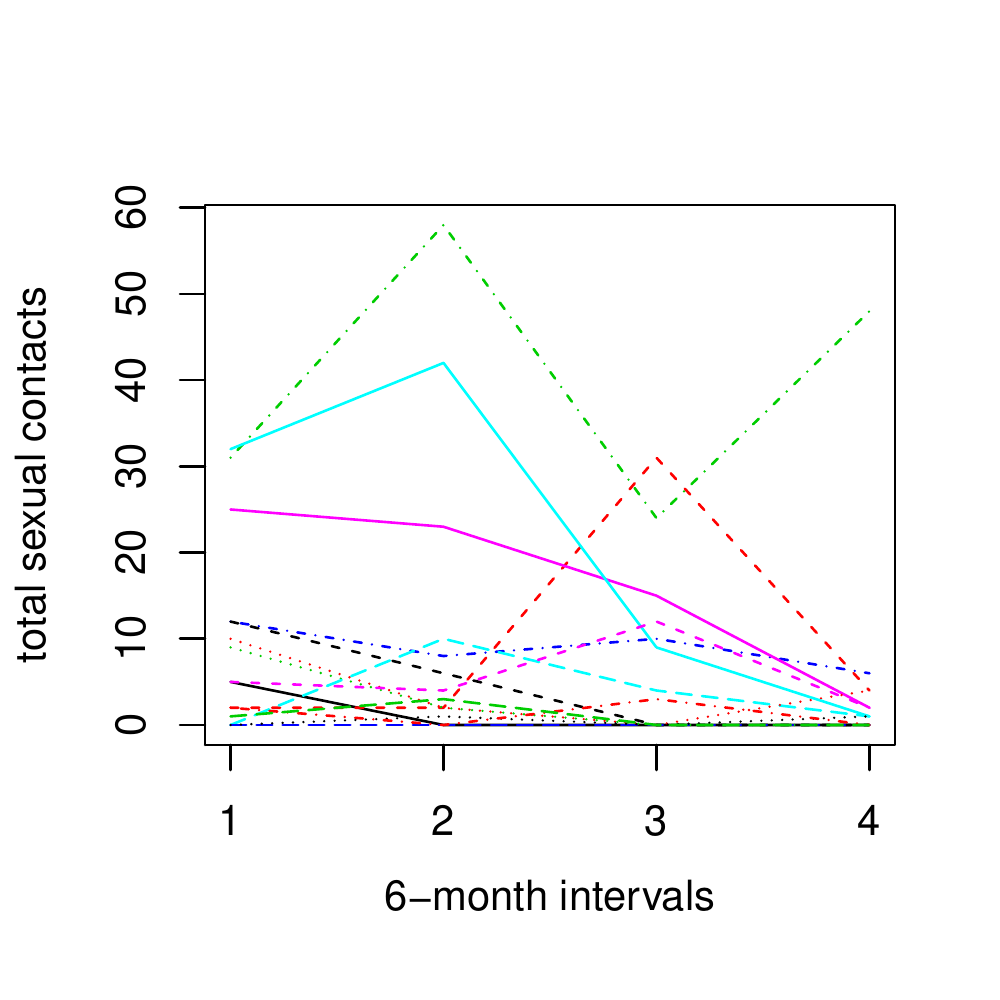} 

}

\end{knitrout}
\vspace{-8mm}
\caption{Sample of $15$ time series in the panel dataset from \cite{vittinghoff99} on total sexual contacts over four consecutive 6-month periods, for the $882$ men having no missing observations. The full data can be accessed from the \texttt{pancon} panelPomp object included in the \texttt{panelPomp} package.}\label{fig:contacts-data}
\end{figure}


\section{Algorithmic parameters}

The following tables specify all the algorithmic parameters used for our examples.
\newcommand\algparsDiagnosis{Algorithmic parameters were chosen by assessing diagnostic plots, together with consideration of total run time and quantification of Monte Carlo error in the final results. Diagnosis via convergence plots and effective sample size plots was carried out as described by \citet{king16}.}%
\revision{\algparsDiagnosis}

\begin{table}[H]
\begin{center}
\begin{tabular}{ccccc}
\toprule
 & Gompertz & polio & polio (MCAP interval) & contacts \\ 
  \midrule
\Np$_{\mathrm{pf}}$ & 4000 & 5000 & 5000 & 4000 \\ 
  \Nrep$_{\mathrm{pf}}$ & 10 & 10 & 10 & 10 \\ 
  \Np$_{\mathrm{if}}$ & 2000 & 4000 & 4000 & 4000 \\ 
  \Nrep$_{\mathrm{if}}$ & 13 & 19 & 27 & 13 \\ 
  \Nmif & 100 & 236 & 236 & 200 \\ 
  \Np$_{\mathrm{if},\mathrm{\unit}}$ & 1000 & 6000 & 6000 & -- \\ 
  \Nrep$_{\mathrm{pf},\mathrm{\unit}}$ & 4 & 2 & 3 & -- \\ 
  \Nmif$_{\mathrm{\unit}}$ & 50 & 118 & 118 & -- \\ 
  $\lambdaMCAP$ & 0.9 & 0.6 & 0.9 & 0.9 \\ 
   \bottomrule

\end{tabular}
\end{center}
\caption{Algorithmic parameter values used to produce plots in the main text.
\Np$_{\mathrm{pf}}$ particles were used for each of \Nrep$_{\mathrm{pf}}$ replicated particle filter Monte Carlo log likelihood estimates. 
These replicates were averaged using $\likB$ (defined in Section~\SuppSecLikAvg).
The resulting log likelihood estimates correspond to parameter values with maximum likelihood that were reached initializing the joint step of the panel iterated filtering algorithm at \Nrep$_{\mathrm{if}}$ different parameter starting values ($\Theta^0_{1:J}$ in the description of the PIF algorithm in the main text), running the algorithm for {\Nmif} iterations.
These \Nrep$_{\mathrm{if}}$ convergence points from the joint step were used to initialize \Nrep$_{\mathrm{pf},\mathrm{\unit}}$ marginal steps with \Nmif$_{\mathrm{\unit}}$ iterations and \Np$_{\mathrm{if},\mathrm{\unit}}$ particles. \revision{Monte Carlo profiles were obtained by applying loess smoothing to the profile evaluations with smoothing parameter $\lambdaMCAP$. A smaller value of $\lambdaMCAP$ was used for the initial exploratory polio profile than for the MCAP confidence intervals.}
}\label{tab:algpars}
\end{table}

\begin{table}[H]
\begin{center}
\begin{tabular}{ccccc}
\toprule
 & lower bound & upper bound & $\sigma_0$ & $\sigma_{\mathrm{\unit},0}$ \\ 
  \midrule
r & 0.05 & 0.20 & 0.00125 & -- \\ 
  $\tau_\unit$ & 0.05 & 0.20 & 0.05000 & 0.05 \\ 
   \bottomrule

\end{tabular}
\end{center}
\caption{Starting values, parameter transformations and perturbation specifications for applying PIF to the Gompertz model.
The first two columns give the lower and upper bounds of a hyper-rectangle sampled uniformly to generate a value used to initialize all particles $\Theta^0_{1:J}$ for each independent PIF replicate.
For the joint maximization, the perturbation sequence used was $\sigma_m=\sigma_0 0.5^{m/50}$.
For the marginal maximization, $\sigma_m=\sigma_{\unit,0}0.25^{m/50}$ was used instead.
These random perturbations were carried out as Gaussian random walks after applying a logarithmic transformation to ensure non-negativity constraints were met.
}\label{tab:gompertz}
\end{table}

\begin{table}[H]
\begin{center}
\begin{tabular}{cccccc}
\toprule
 & lower bound & upper bound & transformation & $\sigma_0$ & $\sigma_{\mathrm{\unit},0}$ \\ 
  \midrule
$\sigma_{\mathrm{dem}}$ & 0.0 & 0.50 & log & 0.02 & -- \\ 
  $\psi$ & 0.0 & 0.10 & log & 0.02 & -- \\ 
  $\tau$ & 0.0 & 0.10 & log & 0.02 & -- \\ 
  $b_{\unit,1}$ & -2 & 8.00 & -- & 0.02 & 0.02 \\ 
  $b_{\unit,2}$ & -2 & 8.00 & -- & 0.02 & 0.02 \\ 
  $b_{\unit,3}$ & -2 & 8.00 & -- & 0.02 & 0.02 \\ 
  $b_{\unit,4}$ & 1.0 & 11.0 & -- & 0.02 & 0.02 \\ 
  $b_{\unit,5}$ & -2 & 8.00 & -- & 0.02 & 0.02 \\ 
  $b_{\unit,6}$ & -2 & 8.00 & -- & 0.02 & 0.02 \\ 
  $\sigma_{\unit,\mathrm{env}}$ & 0.0 & 1.00 & log & 0.02 & 0.02 \\ 
  $\tilde S^O_{\unit,0}$ & 0.0 & 1.00 & logit & 0.10 & 0.10 \\ 
  $\tilde I^O_{\unit,0}\times10^4$ & 0.0 & 4.00 & logit & 0.20 & 0.20 \\ 
   \bottomrule

\end{tabular}
\end{center}
\caption{Starting values, parameter transformations and perturbation specifications for applying PIF to the polio model.
The first two columns give the lower and upper bounds of a hyper-rectangle sampled uniformly to generate a value used to initialize all particles $\Theta^0_{1:J}$ for each independent PIF replicate.
For the joint maximization, the perturbation sequence used was $\sigma_m=\sigma_0 0.5^{m/50}$.
For the marginal maximization, $\sigma_m=\sigma_{\unit,0}0.25^{m/50}$ was used instead.
Some of these random perturbations were carried out as Gaussian random walks after applying a transformation
to ensure that non-negativity and unit-interval constraints were met, and these transformations are given in the third column.
}\label{tab:polio}
\end{table}

\begin{table}[H]
\begin{center}
\begin{tabular}{ccccc}
\toprule
 & lower bound & upper bound & transformation & $\sigma_0$ \\ 
  \midrule
$\mu_X$ & 0.80 & 3.00 & log & 0.01 \\ 
  $\sigma_X$ & 1.40 & 5.00 & log & 0.01 \\ 
  $\mu_D$ & 1.80 & 7.00 & log & 0.01 \\ 
  $\sigma_D$ & 2.00 & 8.50 & log & 0.01 \\ 
  $\alpha$ & 0.70 & 0.99 & logit & 0.01 \\ 
   \bottomrule

\end{tabular}
\end{center}
\caption{
Starting values, parameter transformations and perturbation specifications for applying PIF to the contacts model.
The first two columns give the lower and upper bounds of a hyper-rectangle sampled uniformly to generate a value used to initialize all particles $\Theta^0_{1:J}$ for each independent PIF replicate.
For the joint maximization, the perturbation sequence used was $\sigma_m=\sigma_0 0.5^{m/50}$.
These random perturbations were carried out as Gaussian random walks after applying the transformations in the third column to ensure that non-negativity and unit-interval constraints were met.
Marginal maximization was not applicable in this example since all parameters were shared.
}\label{tab:contacts}
\end{table}

\end{document}